\pdfoutput=1
\documentclass[12pt]{article}

\usepackage[margin=1in]{geometry}
\usepackage{amsmath,amssymb,amsfonts,amsthm}
\usepackage{bm}
\usepackage{graphicx}
\graphicspath{{./}}
\usepackage[dvipsnames]{xcolor}
\usepackage{microtype}
\usepackage[font=small,labelfont=bf]{caption}
\usepackage[section]{placeins}
\usepackage{enumitem}
\usepackage[colorlinks=true,linkcolor=MidnightBlue,citecolor=MidnightBlue,urlcolor=MidnightBlue]{hyperref}
\usepackage{cite}
\newtheorem{theorem}{Theorem}
\newtheorem{remark}[theorem]{Remark}
\newtheorem{proposition}[theorem]{Proposition}
\title{Exact Schwarzian Metric Factor and \\
Holographic Wilson-Loop Screening}

\author{Miguel Tierz\\[4pt]
{\small Shanghai Institute for Mathematics and Interdisciplinary Sciences (SIMIS),}\\
{\small Block A, No.\ 657 Songhu Road, Yangpu District, Shanghai 200438, China}\\[2pt]
{\small \texttt{tierz@simis.cn}}}
\date{\today}
\begin{document}
\maketitle

\begin{abstract}
We evaluate exactly the radial metric factor $h(\zeta)$ generated by
Schwarzian averaging in the AdS$_2$ throat of an extremal
Reissner--Nordstr\"om AdS$_5$ black brane.
The result is a Gaussian integral against $\coth(\pi y)$, valid at all radial
depths, which Mordell's identity turns into an exact Appell--Lerch
$q/q^\ast$-series representation.
The dual series identifies the
nonperturbative scale $e^{-\pi^2 C/\zeta}$ missed by any finite
near-boundary truncation.
The third parametric derivative required by the evaluation generates the
quasimodular Eisenstein series $E_2$, absent from the classical Mordell identity.
From the integral representation we prove that
$\mathcal{G}_0(\zeta):=h(\zeta)/\zeta^2$ is completely monotone and hence has no
interior minimum, so any confining minimum produced by a finite near-boundary
truncation is an artifact.
We also compute the exact relative variance of the Schwarzian kernel,
which makes the averaged-metric approximation error quantitative and shows that
the absence of a confining minimum is robust across moment-based effective
geometries.
Applied to the temporal rectangular Wilson loop, the exact throat gives
algebraic screening, $E(L)\sim -\kappa_{\rm IR}/L^2$, the Wilson-loop
diagnostic of the semi-local quantum-liquid IR of the extremal RN brane.
A numerical check in a simple matched geometry
confirms that the screened saddle is the dominant string configuration,
and an exact-versus-truncated force comparison shows that the apparent
constant-force regime of the fourth-order truncation is not a feature of the
exact geometry.
\end{abstract}

\tableofcontents
\section{Introduction}

In the AdS/CFT correspondence~\cite{Maldacena:1997re,Witten:1998qj}, the heavy quark-antiquark potential at strong coupling is computed from the area of a string worldsheet ending on a Wilson loop at the AdS boundary~\cite{Maldacena:1998im,Rey:1998ik}. In conformal backgrounds the result is purely Coulombic~\cite{Maldacena:1998im,Rey:1998ik}. Turning on a finite charge density changes this picture: near-extremal Reissner--Nordstr\"om (RN) AdS black branes~\cite{Chamblin:1999tk} develop a near-horizon AdS$_2$ factor.  On the field-theory side this AdS$_2\times\mathbb{R}^3$ region realizes a semi-local quantum liquid, or local quantum critical IR sector, with time scaling but no spatial scaling ($z\to\infty$)~\cite{Faulkner:2009wj,Iqbal:2011ae}. The planar spatial directions are compactified on $T^3$ for the dimensional reduction below.  The low-energy quantum fluctuations in this region are controlled by Jackiw--Teitelboim (JT) gravity and its Schwarzian boundary mode~\cite{Jackiw:1984,Teitelboim:1983ux,Almheiri:2014cka,Faulkner:2009wj,Maldacena:2016hyu,Jensen:2016pah,Maldacena:2016upp,Yang:2018qgravity,Saad:2019lba} (see~\cite{Mertens:2022review} for a review). The Schwarzian effective action governs the pattern of conformal symmetry breaking in nearly AdS$_2$ spacetimes~\cite{Maldacena:2016hyu,Maldacena:2016upp,Kitaev:2017awl}, and its path integral can be evaluated exactly~\cite{Stanford:2017thb,Mertens:2017,Mertens:2018fds,Yang:2018qgravity,Saad:2019lba}.

This framework has led to a broad program of re-examining classical holographic results with the inclusion of quantum corrections from the Schwarzian sector. Quantum averaging over the Schwarzian mode~\cite{Blommaert:2019hjr,Iliesiu:2020qvm} produces nontrivial radial kernels in the AdS$_2$ throat. The same Schwarzian kernel, or closely related near-AdS$_2$ quantum factors, enters a range of holographic observables, including Wilson loops~\cite{Liu:2024}, transport coefficients~\cite{Liu:2024strange}, hydrodynamic modes~\cite{Nian:2025fluid}, the shear viscosity to entropy density ratio~\cite{PandoZayas:2025etaS,Cremonini:2025etaS}, Hawking radiation spectra~\cite{Maulik:2025hawking}, and quasi-normal mode frequencies~\cite{Jiang:2025qnm}. A common feature of these applications is that near-boundary or low-temperature expansions are often used beyond their manifest domain of validity, leading in some cases to qualitative modifications: a linear quark-antiquark potential~\cite{Liu:2024}, regularization of the fluid/gravity correspondence~\cite{Nian:2025fluid}, and nontrivial low-temperature corrections to $\eta/s$, with regime-dependent conclusions about the Kovtun--Son--Starinets bound~\cite{PandoZayas:2025etaS,Cremonini:2025etaS,Gouteraux:2025shear,Kanargias:2025shear}.

Although the Schwarzian two-point function is exactly known~\cite{Stanford:2017thb,Mertens:2017,Saad:2019lba}, the induced radial profile $h(\zeta)$, with $\zeta$ the AdS$_2$ radial coordinate (small near the boundary, large deep in the throat), has not previously been put in a form valid at all radial depths, as needed for analyzing holographic observables throughout the throat. In practical applications, the Schwarzian-corrected profile is represented by a near-boundary expansion in $\zeta/C$, truncated at finite order, where $C$ is the Schwarzian coupling setting the scale that separates the classical and quantum regimes. This is reliable close to the AdS$_2$ boundary but not deep in the throat, where $\zeta/C \gg 1$ and the asymptotic series loses uniform control. The issue is especially sharp for extended probes, whose saddles can penetrate to radial depths $\zeta \gtrsim C$. Wilson loops provide a useful diagnostic: one such finite-order implementation produced an approximately linear quark-antiquark potential at intermediate distances~\cite{Liu:2024}, reminiscent of the Cornell potential~\cite{Eichten:1975,Eichten:1978,Bali:2000gf} and raising the question of holographic confinement~\cite{Witten:1998zw,Kinar:1999,Sonnenschein:2000qm}. Whether this linear behavior persists in the exact geometry is a nonperturbative question that cannot be settled without the full radial profile.

In this paper we obtain the exact Schwarzian-corrected metric factor $h(\zeta)$ throughout the AdS$_2$ throat. The Schwarzian kernel reduces, after a change of variables, to a Gaussian-weighted integral against $\coth(\pi y)$. This exact integral has two uses:
\begin{enumerate}[itemsep=4pt,topsep=4pt]
\item Endpoint analysis directly yields the large-$\zeta$ behavior $h(\zeta)\sim(\zeta/C)^{1/2}$, rather than the polynomial growth implied by any finite truncation. A turning-point analysis in the near-horizon geometry then gives
\begin{equation}
  E(L)\;\sim\; -\frac{\kappa_{\rm IR}}{L^2}\,,\qquad L\gg C\,,
\end{equation}
for the renormalized quark-antiquark potential at large separation, with the coefficient $\kappa_{\rm IR}$ obtained in closed form (Eq.~\eqref{eq:kappa-IR-restored}). The potential exhibits \emph{algebraic screening} --- power-law decay rather than the exponential (Debye) screening of a finite-temperature plasma or the linear confinement of the truncated series.  This power law is the Wilson-loop manifestation of the scale-free semi-local IR of the extremal brane.  In Sec.~\ref{subsec:numerical-branch-selection} we also package the isolated throat result as a normalized universal curve, Eq.~\eqref{eq:universal-throat-function}, so that the IR answer can be reused independently of the UV completion.
\item The integral is of the type classified by Mordell~\cite{Mordell:1933}, and the Mordell identity provides an explicit closed-form evaluation of $h(\zeta)$ (Eq.~\eqref{eq:h-q-series}) in terms of Appell--Lerch sums and theta functions~\cite{Zwegers:2002,Andrews:2018} --- the $q$-series defined in Sec.~\ref{sec:mordell-eval} --- with two complementary $q$-expansions, one suited to the near-boundary regime and one to the deep interior. Carrying out the required third derivative additionally generates the weight-2 Eisenstein series $E_2$~\cite{Andrews:2018}, which mixes the two channels and is not itself part of the classical Mordell identity.
\end{enumerate}
The strict linear confinement and putative mass gap found in finite-order analyses are not properties of the Schwarzian throat itself but of its truncation: the ratio $\mathcal{G}_0(\zeta):=h(\zeta)/\zeta^2$, which controls the effective string tension in the throat, decays monotonically rather than developing a minimum, and the tension vanishes deep in the throat.

The integral representation~\eqref{eq:h-exact-normalized} directly yields the UV coefficients and complete monotonicity, and, by endpoint analysis, the IR $(\zeta/C)^{1/2}$ asymptotics. The Mordell/Appell--Lerch evaluation (Sec.~\ref{sec:mordell-eval}) goes further: by carrying out the required third parametric derivative of the Mordell identity and exploiting a cancellation between the direct and $S$-dual Appell--Lerch channels at the degenerate point, we obtain the closed-form $q/q^\ast$-series~\eqref{eq:h-q-series}. Any finite Taylor truncation misses the entire $S$-dual sector, controlled by $e^{-\pi^2 C/\zeta}$; the Mordell formula makes these nonperturbative corrections quantitatively accessible.\footnote{Here and below, ``nonperturbative'' refers to contributions of order $\exp(-\pi^2 C/\zeta)$ that are invisible to any finite asymptotic expansion in $\zeta/C$ of the zero-temperature Schwarzian kernel. This should not be confused with a nonperturbative completion of JT gravity in the matrix-integral or topology-summed sense~\cite{Saad:2019lba}.} The finite-order Wilson-loop analysis of~\cite{Liu:2024} is recovered as the near-boundary approximation to the exact result.

Our exact representation of $h(\zeta)$ holds at all radial depths of the AdS$_2$ throat. By itself it does not determine the UV gluing to the asymptotically AdS$_5$ region, which is theory-dependent: the same Schwarzian zero mode dominates universally at $T\to 0$, but the way this description emerges from the UV completion differs between, e.g., RN-AdS$_5$, RN without cosmological constant, and near-extremal Kerr. Accordingly, the large-$L$ screening result depends only on the throat asymptotics and is robust, but intermediate-distance observables continue to depend on the matching prescription, which we take from~\cite{Liu:2024}.
We also include a numerical check
(Sec.~\ref{subsec:numerical-branch-selection}) in a simple matched
geometry.  This check is not a first-principles construction of the full
quantum-corrected interpolation, but it verifies that the exact-throat screened
saddle lies on a monotone, concave branch in the natural RN-AdS$_5$
completion and is not pre-empted by an intermediate cusp or maximal-separation
transition.
All results are at strict extremality ($T=0$). The zero-temperature kernel is the $\beta\to\infty$ limit of the finite-temperature Schwarzian kernel (Sec.~\ref{sec:mordell-metric}), and finite-temperature extensions are discussed in Sec.~\ref{sec:outlook}.  The distinction is physical rather than technical: the infinitely long extremal throat supports the algebraic tail, while any nonzero temperature caps the throat and is expected to restore Debye-type exponential screening at sufficiently large separation.

The paper is organized as follows. Section~\ref{sec:geometry} reviews the quantum-corrected geometry. Section~\ref{sec:mordell-metric} derives the exact integral representation, its large-$\zeta$ asymptotics, the closed-form Mordell/Appell--Lerch evaluation of $h$, the general-$\Delta$ kernel family, and the variance analysis of the averaged-metric prescription. Section~\ref{sec:rectangular} contains the main Wilson-loop analysis: the large-distance screening, the role of truncation, the turning-point derivations, and the observable-level exact-versus-truncated force comparison. Section~\ref{sec:outlook} discusses open problems, and Appendix~\ref{app:numerics} records the numerical implementation details needed to reproduce the figures.
\section{Quantum-corrected near-horizon geometry}
\label{sec:geometry}

\medskip\noindent\textit{Conventions.}
Throughout we use the dimensionless ratio $\xi:=\zeta/C$ and the purely imaginary modular parameter $\tau:= i\xi/\pi$ (so $\Im\tau>0$). The nome (the standard exponential parameter of elliptic function theory) is $q:=e^{\pi i\tau}=e^{-\xi}$, and the image under the modular inversion $\tau\mapsto -1/\tau$ gives $q^\ast:=e^{-\pi i/\tau}=e^{-\pi^2/\xi}$. We follow the Jacobi $\theta_{11}$ conventions of~\cite{Andrews:2018}. All Schwarzian/JT computations (Sec.~\ref{sec:mordell-metric}) are carried out in Euclidean signature via the Wick rotation $t\to -it_E$. The Wilson-loop analysis (Sec.~\ref{sec:rectangular}) is likewise performed in Euclidean signature; the static quark-antiquark potential is then extracted from the large-$T$ behavior of the Euclidean Wilson loop in the standard way~\cite{Maldacena:1998im}.

We follow the RN-AdS$_5$ near-horizon setup and matching conventions of~\cite{Liu:2024}, while deriving below the exact form of the Schwarzian factor, valid at all radial depths. Consider a near-extremal Reissner--Nordstr\"om AdS$_5$ black brane with metric
\begin{equation}
  ds^2 = -\frac{u^2}{L_{\rm AdS}^2} f(u)\,dt^2
  + \frac{L_{\rm AdS}^2}{u^2}\frac{du^2}{f(u)}
  + \frac{u^2}{L_{\rm AdS}^2}d\vec{x}^{\,2},
  \label{eq:RN-metric}
\end{equation}
where $d\vec{x}^{\,2}$ is the flat metric on $T^3$ and
\begin{equation}
  f(u) = 1 - (1+Q^2)\frac{u_T^4}{u^4}
  + Q^2\frac{u_T^6}{u^6}.
\end{equation}
The extremal limit corresponds to $Q^2\to 2$ at fixed horizon radius $u_T$. In this limit the near-horizon region develops an AdS$_2$ factor~\cite{Faulkner:2009wj,Almheiri:2014cka,Maldacena:2016upp}. Introducing the small parameter $\lambda \equiv (u-u_T)/u_T$ and zooming into the region $\lambda\ll 1$, one finds that the metric reduces to AdS$_2\times T^3$ up to subleading terms. A convenient radial coordinate for the AdS$_2$ factor is
\begin{equation}
  u = u_T + \frac{L_{\rm AdS}^2}{12\,\zeta},
  \label{eq:zetaUmap}
\end{equation}
in terms of which the near-horizon metric takes the form
\begin{equation}
  ds^2_{\rm NHR} \approx
  \frac{L_2^2}{\zeta^2}(-dt^2 + d\zeta^2) +
  \frac{u_T^2}{L_{\rm AdS}^2}\,d\vec{x}^{\,2},
\end{equation}
with an effective AdS$_2$ radius $L_2$.

Dimensional reduction on $T^3$ and the near-horizon approximation lead to a JT gravity sector controlling quantum fluctuations of the AdS$_2$ factor~\cite{Almheiri:2014cka,Iliesiu:2020qvm,Maldacena:2016upp}. The Euclidean JT path integral over Schwarzian modes produces an exact boundary two-point function~\cite{Mertens:2017}. Physically, the Schwarzian mode is the boundary graviton of the nearly AdS$_2$ geometry: it parametrizes reparametrizations of the AdS$_2$ boundary, and the JT path integral averages over all such reparametrizations. This averaging modifies the two-point kernel that controls the bulk geometry. In the averaged-metric prescription of~\cite{Liu:2024}, this modification is encoded in an effective quantum-corrected metric, written here in Euclidean signature (the Wick rotation $t\to -it_E$ of the Lorentzian near-horizon form above, as appropriate for the Schwarzian path integral):
\begin{equation}
  \langle ds^2_{\rm AdS_2}\rangle
  = h(\zeta)\,\frac{L_2^2}{\zeta^2}(dt^2 + d\zeta^2),
  \label{eq:quantum-ads2}
\end{equation}
where the dimensionless metric factor $h(\zeta)$ encodes all quantum corrections from the Schwarzian sector. Its normalization is fixed by requiring that $h(\zeta)\to1$ as $\zeta\to 0$, where the geometry approaches the AdS$_2$ boundary.

The exact expression for $h(\zeta)$ follows from the Schwarzian boundary two-point function~\cite{Mertens:2017,Engelsoy:2016,Stanford:2017thb,Blommaert:2018oro,Iliesiu:2019xuh,Yang:2018qgravity,Maldacena:2016upp}. The parameter $C$ introduced above plays the role of a JT heat capacity. In terms of it the quantum-averaged kernel is given by
\begin{equation}
  G_{\partial\partial}(2\zeta)
  = \frac{1}{2C}\int_0^\infty d\omega
  \,\sinh\!\bigl(2\pi\sqrt{2C\omega}\bigr)\,
  e^{-2\zeta\omega}\,
  \Gamma\!\left(1+i\sqrt{2C\omega}\right)^2
  \Gamma\!\left(1-i\sqrt{2C\omega}\right)^2.
  \label{eq:Gpp-original}
\end{equation}
Up to an overall normalization fixed by $h(0)=1$, one can identify $h(\zeta) \propto (\zeta/C)^2\,G_{\partial\partial}(2\zeta)$. Expanding this expression near the boundary gives
\begin{equation}
  h(\zeta) \simeq 1 + c_2\left(\frac{\zeta}{C}\right)^2
  - c_3\left(\frac{\zeta}{C}\right)^3
  + c_4\left(\frac{\zeta}{C}\right)^4
  + \cdots,
  \label{eq:h-small-zeta}
\end{equation}
with the approximate coefficients $c_2=3/(2\pi^4)$, $c_3=15/(2\pi^6)$, $c_4=315/(8\pi^8)$. The exact coefficients, which differ by Riemann zeta factors, are derived in Sec.~\ref{sec:mordell-metric} once the exact integral representation is available.

In the far region the geometry is still described by the RN-AdS$_5$ metric \eqref{eq:RN-metric}, now dressed~\cite{Liu:2024} by the same quantum factor $h(u)$ that appears in the near-horizon AdS$_2$ patch. Following~\cite{Liu:2024}, we adopt a continuation/matching ansatz that extends the exact throat factor $h(\zeta)$ into the far AdS$_5$ region, giving the quantum-corrected AdS$_5$ metric in the coordinates $U \equiv u/\alpha'$, $U_T \equiv u_T/\alpha'$ as
\begin{equation}
  ds^2 = \alpha'
  \left[
  \frac{U^2}{R^2}\bigl(f(U)h(U)\,dt^2 + d\vec{x}^{\,2}\bigr)
  + \frac{R^2}{U^2}\,f(U)^{-1}h(U)\,dU^2
  \right],
  \label{eq:quantum-ads5}
\end{equation}
where $R$ is related to $L_{\rm AdS}$ and the effective string tension through $\sqrt{\alpha'} = L_{\rm AdS}/R$. The function $h(U)$ is the continuation of $h(\zeta)$ from the AdS$_2$ region along the radial flow. Near the boundary the expansion \eqref{eq:h-small-zeta} translates into an expansion of $h(U)$ in inverse powers of $(U-U_T)$. From the point of view of Wilson loops, all observables depend on $h(\zeta)$ only through the parametric integrals for the separation $L$ and the Nambu--Goto action $S_{\rm NG}$. The exact integral representation of $h(\zeta)$ is directly applicable to the large-$L$ regime, where the worldsheet probes the AdS$_2$ throat. At intermediate and short distances, the Wilson loop observables depend on $h(U)$ in the far region where the near-horizon and far metrics are matched; in that regime the results rely on the matching ansatz of~\cite{Liu:2024}. The geometry and the string configurations are illustrated schematically in Fig.~\ref{fig:schematic}.

\begin{figure}[t]
  \centering
  \includegraphics[width=0.48\linewidth]{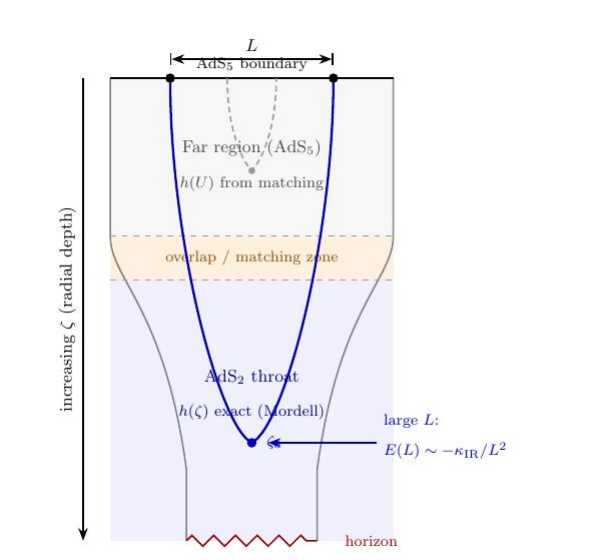}
  \caption{Schematic of the radial geometry and string worldsheets. The AdS$_5$ boundary is at the top; radial depth $\zeta$ increases downward toward the horizon. The near-horizon AdS$_2$ throat (blue shading) is where the quantum metric factor $h(\zeta)$ is controlled by the exact integral representation. The far region (gray) is the asymptotically AdS$_5$ part, where $h(U)$ is determined by matching. For large quark-antiquark separation $L$ (solid curve), the string turning point $\zeta_0$ lies deep in the throat, and the potential is governed by the exact $h(\zeta)\sim\sqrt{\zeta}$ scaling. For small $L$ (dashed curve), the turning point remains in the far region, and the results depend on the matched metric.}
  \label{fig:schematic}
\end{figure}

\section{Exact integral representation and Mordell structure of the metric factor}
\label{sec:mordell-metric}

In this section we reduce the Schwarzian kernel~\eqref{eq:Gpp-original} to a one-dimensional integral of the type classified by Mordell~\cite{Mordell:1933}. Throughout this section we work in the dimensionless radial variable $\xi:=\zeta/C$ introduced above, writing $h(\xi)$ for the same metric factor as $h(\zeta)$. Such integrals have appeared previously in the exact solution of Chern--Simons matrix models~\cite{Russo:2015cst} and in related JT gravity contexts~\cite{Blommaert:2018oro,Griguolo:2021bilocal}. An important distinction is worth noting at the outset: in the Chern--Simons setting of~\cite{Russo:2015cst}, the nome $q$ is a root of unity ($|q|=1$) and the Mordell integral evaluates to finite sums involving Gauss-type exponential sums, whereas in the present JT gravity application $q=e^{-\zeta/C}$ is real and lies strictly inside the unit disk. The Mordell identity then produces convergent $q$-series rather than finite sums, and the modular properties take a qualitatively different form. We extract the exact Gaussian-coth integral, its large-$\xi$ asymptotics, and, by carrying out the $\partial_x^3$ differentiation of the Mordell identity, the explicit closed-form evaluation of $h(\xi)$ in terms of Appell--Lerch sums, the Jacobi theta function, and the weight-2 Eisenstein series~$E_2$.  We then record the general-conformal-dimension extension and use it to quantify the variance of the averaged-metric prescription.

\subsection{From the Schwarzian kernel to an exact Gaussian-coth integral}

The four gamma functions in \eqref{eq:Gpp-original} can be grouped into two conjugate pairs. Using
\begin{equation}
\Gamma(1+iy)\,\Gamma(1-iy)=\frac{\pi y}{\sinh(\pi y)}\,,
\label{eq:gamma_pair}
\end{equation}
the product of the four gamma functions becomes $[\pi y/\sinh(\pi y)]^2$, while the $\sinh(2\pi y)$ factor from~\eqref{eq:Gpp-original} combines with this to give
\begin{equation}
\sinh(2\pi y)\times\left[\frac{\pi y}{\sinh(\pi y)}\right]^2=2\pi^2 y^2\,\coth(\pi y)\,.
\end{equation}
Substituting $y=\sqrt{2C\omega}$ (so $d\omega=y\,dy/C$ and $\omega=y^2/(2C)$) into~\eqref{eq:Gpp-original}, the factor $2\pi^2 y^2$ together with $d\omega = y\,dy/C$ and the prefactor $1/(2C)$ yields
\begin{equation}
G_{\partial\partial}(2\zeta)=\frac{\pi^2}{C^2}\int_0^\infty dy\,y^3\,e^{-(\zeta/C)y^2}\,\coth(\pi y)\,.
\label{eq:Gsingle}
\end{equation}
The quantum metric factor is identified with the finite boundary limit of $(\zeta/C)^2\,C^2\,G_{\partial\partial}(2\zeta)/\pi^2$; the explicit $(\zeta/C)^2$ prefactor is required because $G_{\partial\partial}$ diverges as $\zeta^{-2}$ near the boundary, while $h(\zeta)\to 1$. Thus
\begin{equation}
h(\zeta)=N_h\,\Bigl(\frac{\zeta}{C}\Bigr)^{2}\int_0^\infty dy\,y^3\,e^{-(\zeta/C)y^2}\,\coth(\pi y)\,,
\label{eq:hsingle}
\end{equation}
where $N_h$ is now fixed entirely by the boundary condition $h(0)=1$.
To determine $N_h$, write $\coth(\pi y)=1+2/(e^{2\pi y}-1)$, so
\begin{equation}
\lim_{\xi\to 0}\xi^2\int_0^\infty dy\,y^3\,e^{-\xi y^2}\coth(\pi y)
=\int_0^\infty dy\,y^3\,e^{-\xi y^2}\Big|_{\text{leading}}
=\frac{\Gamma(2)}{2\xi^2}\;\xrightarrow{\;\xi^2\times\;}\;\frac{1}{2}\,,
\end{equation}
and $h(0)=1$ requires $N_h=2$. The exact metric factor is therefore
\begin{equation}
h(\xi)=2\xi^{2}\int_0^\infty dy\,y^3\,e^{-\xi y^2}\,\coth(\pi y)\,.
\label{eq:h-exact-normalized}
\end{equation}

\medskip\noindent\textit{Near-boundary asymptotic expansion and exact coefficients.}
\label{sec:exact-Taylor}
The exact near-boundary expansion of~\eqref{eq:h-exact-normalized}, in the sense of an
asymptotic expansion as $\xi\to0^+$, is
\begin{equation}
h(\xi)\sim 1+4\sum_{m=0}^{\infty}(-1)^m\,
\frac{\Gamma(2m+4)\,\zeta_{\rm R}(2m+4)}{m!\,(2\pi)^{2m+4}}\;\xi^{m+2}
=1+\frac{\xi^2}{60}-\frac{\xi^3}{126}+\frac{\xi^4}{240}-\frac{\xi^5}{396}+\cdots,
\label{eq:h-exact-Taylor}
\end{equation}
where $\zeta_{\rm R}$ denotes the Riemann zeta function.
The coefficients are exact, but the series should not be interpreted as a
convergent Taylor series at $\xi=0$.  The exact and approximate~\cite{Liu:2024}
coefficients are related by
\begin{equation}
c_k^{\rm exact} = \zeta_{\rm R}(2k)\;c_k^{\rm approx},
\qquad k=2,3,4,\ldots ,
\label{eq:coeff-relation}
\end{equation}
where $c_k$ denotes the coefficient of $\xi^k$.  Equivalently, in the summation
index of~\eqref{eq:h-exact-Taylor},
$c_{m+2}^{\rm exact}=\zeta_{\rm R}(2m+4)c_{m+2}^{\rm approx}$.
For low orders the $\zeta_{\rm R}$ factors are numerically close to one, so the discrepancy is mild; the approximate coefficients of~\cite{Liu:2024} correspond to omitting the $\zeta_{\rm R}(2k)$ factor in each term. The approximate coefficients were used in~\cite{Liu:2024} to compute Wilson loops to leading order in $1/C$.

The factorial growth of the coefficients makes the expansion asymptotic.  For
fixed truncation order $N$ one may write
\begin{equation}
h(\xi)=
1+
4\xi^2
\sum_{m=0}^{N}
\frac{(-\xi)^m}{m!}
\frac{\Gamma(2m+4)\zeta_{\rm R}(2m+4)}{(2\pi)^{2m+4}}
+R_N(\xi),
\end{equation}
with the simple bound
\begin{equation}
|R_N(\xi)|\le
4\xi^{N+3}
\frac{\Gamma(2N+6)\zeta_{\rm R}(2N+6)}
{(N+1)!(2\pi)^{2N+6}},
\qquad \xi>0 .
\end{equation}
The least term occurs at order $m_{\rm opt}\sim \pi^2/\xi$~\cite{Dingle:1973}, and the associated
nonperturbative scale is $e^{-\pi^2/\xi}$.  This is precisely the
$S$-transformed scale $q^\ast=e^{-\pi^2/\xi}$ appearing in the Mordell
description (Sec.~\ref{sec:mordell-eval}).  Equivalently, if
$c_{m+2}$ denotes the coefficient of $\xi^{m+2}$ in~\eqref{eq:h-exact-Taylor},
then
\begin{equation}
 c_{m+2}=(-1)^m
 \frac{2(m+1)\Gamma(m+5/2)}{\sqrt{\pi}\,\pi^{2m+4}}
 \zeta_{\rm R}(2m+4)
 \sim
 (-1)^m\frac{2}{\sqrt{\pi}\,\pi^4}
 m^{5/2}m!\,\pi^{-2m} .
\label{eq:uv-coeff-borel-growth}
\end{equation}
Thus the Borel transform of the UV series has radius $\pi^2$.  Using
$\zeta_{\rm R}(2m+4)=\sum_{k\ge1}k^{-2m-4}$, the natural singularity family is
\begin{equation}
        t=-\pi^2 k^2,
        \qquad k=1,2,\ldots,
\label{eq:borel-singularities}
\end{equation}
matching the poles of $\coth(\pi y)$ at $y=ik$.  The singularities lie on the
negative Borel axis, so the UV expansion is directionally Borel summable for
positive real $\xi$.  The $q^\ast$ channel should therefore be understood not as
a positive-axis ambiguity, but as the modular/resurgent completion data
controlling optimal truncation and analytic continuation across the negative
Borel direction.  Finite near-boundary truncations encode only this asymptotic
UV sector and cannot be extrapolated reliably into the deep throat; the
consequences for the Wilson-loop potential are discussed in
Sec.~\ref{subsec:finite-order-truncation}.

\medskip\noindent\textit{Range of validity of the near-boundary series.}
Before using the truncated expansion~\eqref{eq:h-small-zeta} inside Wilson-loop integrals, we quantify where it is reliable. We compare $h(\xi)$ from~\eqref{eq:h-exact-normalized}, evaluated by direct numerical quadrature, to its truncation through $\mathcal{O}(\xi^4)$. Figure~\ref{fig:h_series} shows the near-boundary agreement and the corresponding relative error, making precise the domain in which the Taylor expansion can be used for short-distance observables and indicating when the exact integral is required.

\begin{figure}[t]
  \centering
  \includegraphics[width=0.92\linewidth]{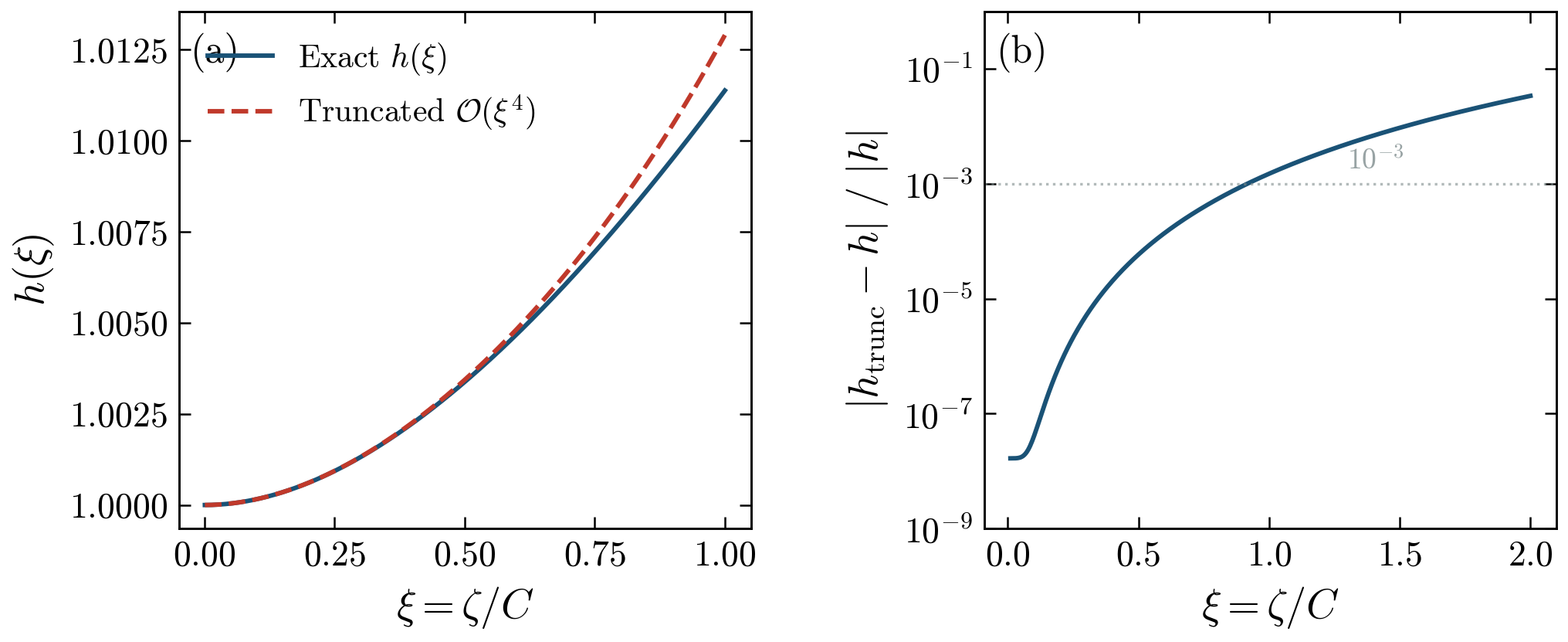}
  \caption{Near-boundary accuracy of the truncated expansion. (a) Exact quantum metric factor $h(\xi)$ (solid) vs.\ the $\mathcal{O}(\xi^4)$ truncation (dashed), with $\xi\equiv\zeta/C$. The two curves are indistinguishable for $\xi\lesssim 0.5$. (b) Relative error on a logarithmic scale. The error remains below $10^{-3}$ for $\xi\lesssim 0.9$ and grows rapidly beyond, motivating the exact integral representation for all infrared quantities.}
  \label{fig:h_series}
\end{figure}

Because $y^3\coth(\pi y)$ is even, one may symmetrize the integral and rewrite it as
\begin{equation}
\int_0^\infty dy\,y^3\,e^{-a y^2}\coth(\pi y)
=
\int_{-\infty}^{\infty} dt\,\frac{t^3\,e^{-a t^2}}{e^{2\pi t}-1}\,,
\qquad a=\frac{\zeta}{C}=\xi\,.
\label{eq:coth_to_bose}
\end{equation}
The right-hand side is a Gaussian-weighted moment against the kernel $1/(e^{2\pi t}-1)$, and integrals of this type fall within the class studied by Mordell~\cite{Mordell:1933}. The connection to Mordell's formalism and its modular properties is made precise in Sec.~\ref{sec:mordell-eval}.

\subsection{\texorpdfstring{Large-$\xi$ asymptotics from the coth integral}{Large-xi asymptotics from the coth integral}}
\label{sec:IR-asymp}
Before turning to the full Mordell evaluation, we extract the large-$\xi$ behavior of $h(\xi)$ directly from \eqref{eq:h-exact-normalized}. For $\xi\gg 1$ the Gaussian factor $e^{-\xi y^2}$ localizes the integral near $y=0$, where $\coth(\pi y)\simeq 1/(\pi y)$. The dominant contribution is therefore
\begin{equation}
\int_0^\infty dy\,y^3\,e^{-\xi y^2}\,\coth(\pi y)
\;\simeq\;
\frac{1}{\pi}\int_0^\infty dy\,y^2\,e^{-\xi y^2}
\;=\;
\frac{\sqrt{\pi}}{4\pi\,\xi^{3/2}}\,,
\qquad \xi\to\infty\,.
\label{eq:coth-saddle}
\end{equation}
Multiplying by $2\xi^2$ gives the leading behavior; including the next terms in the Laurent expansion $\coth(\pi y)=(\pi y)^{-1}+\pi y/3-\pi^3 y^3/45+\cdots$ and integrating term by term yields
\begin{equation}
h(\xi)\;\sim\;\frac{1}{2\sqrt{\pi}}\;\xi^{1/2}
+\frac{\pi^{3/2}}{4}\;\xi^{-1/2}
-\frac{\pi^{7/2}}{24}\;\xi^{-3/2}
+\cdots\,,\qquad \xi\to\infty\,.
\label{eq:h-IR-from-coth}
\end{equation}
The leading $\sqrt{\xi}$ coefficient $1/(2\sqrt{\pi})\approx 0.2821$ is the input for the long-distance Wilson loop analysis. The subleading powers of $1/\xi$ form a divergent asymptotic series. In addition, the Mordell identity discussed below implies the existence of exponentially suppressed corrections of order $e^{-\xi}$ from the $q$-expansion, which are invisible to the asymptotic power series. Figure~\ref{fig:h_full} compares the exact profile of $h(\xi)$ with the truncated series and the IR asymptotics~\eqref{eq:h-IR-from-coth} across the full range of depths.

\begin{figure}[t]
  \centering
  \includegraphics[width=0.65\linewidth]{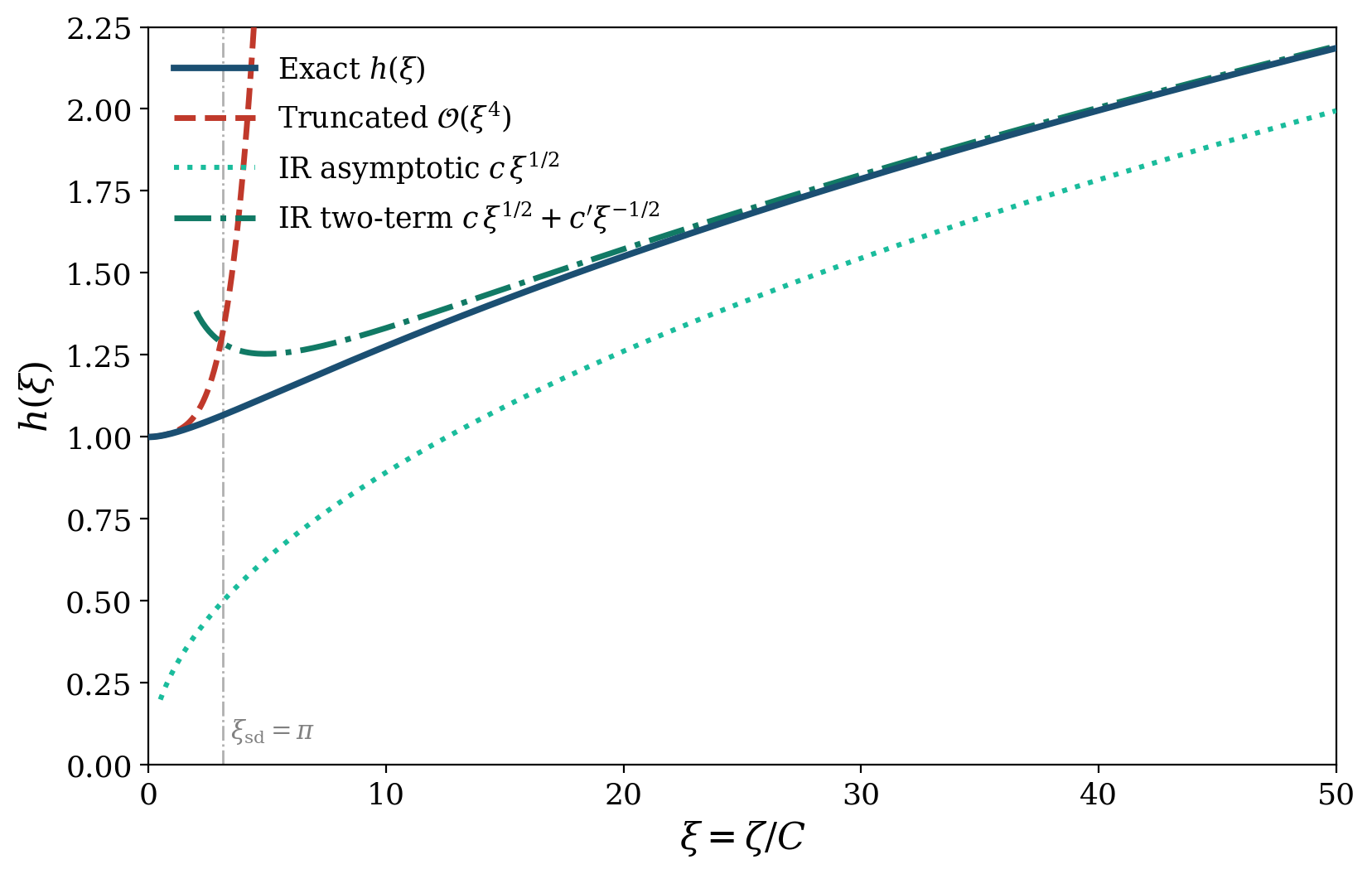}
  \caption{Global profile of the quantum metric factor $h(\xi)$ as a function of $\xi\equiv\zeta/C$ (solid), compared with the $\mathcal{O}(\xi^4)$ truncated series (dashed) and two IR asymptotic approximations from Eq.~\eqref{eq:h-IR-from-coth}: the leading term $c\,\xi^{1/2}$ with $c=1/(2\sqrt{\pi})$ (dotted), and the two-term approximation $c\,\xi^{1/2}+c'\xi^{-1/2}$ with $c'=\pi^{3/2}/4$ (dash-dotted). The truncation diverges for $\xi\gtrsim 2$. The leading $\sqrt{\xi}$ term undershoots by $\sim 9\%$ at $\xi=50$; including the subleading $\xi^{-1/2}$ correction reduces the error to $\sim 0.3\%$, confirming that the two-term asymptotic provides an accurate description of $h$ deep in the throat. The vertical dash-dotted line marks the self-dual point $\xi_{\rm sd}=\pi$ where the two modular expansion parameters $|q|=e^{-\xi}$ and $|q^\ast|=e^{-\pi^2/\xi}$ are equal.}
  \label{fig:h_full}
\end{figure}

\subsection{\texorpdfstring{Mordell identity and complementary $q$-channels}{Mordell identity and complementary q-channels}}
\label{sec:mordell-eval}

Introduce the Mordell integral~\cite{Mordell:1933} (in the normalization used in modern treatments)
\begin{equation}
\mathcal{M}(x,\theta;\tau)
:=
\int_{-\infty}^{\infty}
\frac{e^{\pi i\tau t^2-2\pi x t}}{e^{2\pi t}-e^{2\pi i\theta}}\;dt,
\qquad \Im\tau>0\,.
\label{eq:Mordell_def}
\end{equation}

Here $x$ is an auxiliary parameter in the sense of Mordell~\cite{Mordell:1933} that generates polynomial moments of $t$ by differentiation; it is
unrelated to the boundary spatial coordinate $x$ used later in the worldsheet embedding.

In our application we only need parametric derivatives of \eqref{eq:Mordell_def}. Indeed, differentiating under the integral sign shows that moments of $t$ are generated by $x$-derivatives:
\begin{equation}
\int_{-\infty}^{\infty} dt\;
\frac{t^3\,e^{\pi i\tau t^2}}{e^{2\pi t}-e^{2\pi i\theta}}
=
-\frac{1}{8\pi^3}\,
\left.\partial_x^{\,3}\mathcal{M}(x,\theta;\tau)\right|_{x=0}\,.
\label{eq:moment_dictionary}
\end{equation}
For $\theta=0$ the denominator in \eqref{eq:Mordell_def} vanishes at $t=0$, so the raw integral
$\mathcal{M}(x,0;\tau)$ is naturally understood by analytic continuation in $\theta$ (or, equivalently,
a principal-value contour prescription). In the present application the relevant object is
$\partial_x^{\,3}\mathcal{M}$ evaluated at $x=0$: the three $x$-derivatives bring down a factor of
$t^3$, and the would-be $1/t$ singularity at the origin becomes integrable. Consequently the moment
integral in \eqref{eq:moment_dictionary} is absolutely convergent and the $\theta\to 0$ limit can be taken
without ambiguity. More precisely, the Mordell identity~\eqref{eq:Mordell_identity} holds for generic $\theta$, and
after the $\partial_x^3$ differentiation the resulting integrands are uniformly bounded by a $\theta$-independent
$L^1$ function (provided by the Gaussian decay $e^{-\xi t^2}$), so the identity extends to $\theta\to 0$ by dominated convergence.

Combining \eqref{eq:h-exact-normalized} and \eqref{eq:coth_to_bose} with \eqref{eq:moment_dictionary} gives the relation between $h(\zeta)$ and the Mordell integral:
\begin{equation}
h(\zeta)
=
-\frac{1}{4\pi^3}\,\Bigl(\frac{\zeta}{C}\Bigr)^{2}\,
\left.
\partial_x^{\,3}\mathcal{M}(x,\theta;\tau)\right|_{x=0,\;\theta\to 0,\;\tau=i\zeta/(\pi C)}\,.
\label{eq:JT_to_Mordell_dictionary}
\end{equation}
Mordell~\cite{Mordell:1933} showed that \eqref{eq:Mordell_def} can be evaluated in closed form. Following the conventions of~\cite{Andrews:2018}, in the notation $q=e^{\pi i\tau}$ and $\Im\tau>0$, define
\begin{align}
i\,F(z,\tau)&:=\sum_{m=-\infty}^{\infty}\frac{(-1)^m\,q^{m^2+m+1/4}\,e^{(2m+1)\pi i z}}{1+q^{2m+1}}\,,
\label{eq:AppellLerch_def}\\[3pt]
i\,\theta_{11}(z,\tau)&:=\sum_{m=-\infty}^{\infty}(-1)^m\,q^{m^2+m+1/4}\,e^{(2m+1)\pi i z}\,.
\label{eq:theta11_def}
\end{align}
Here $\theta_{11}$ is the standard odd Jacobi theta function~\cite{Andrews:2018} and $F$ is an Appell--Lerch sum~\cite{Zwegers:2002} (the ratio $F/\theta_{11}$ is the holomorphic Appell--Lerch, or mock-Jacobi, object whose completed form has the modular behavior described by Zwegers~\cite{Zwegers:2002}). In terms of these functions, Mordell's identity reads~\cite{Mordell:1933}
\begin{equation}
\mathcal{M}(x,\theta;\tau)
=
e^{-\pi i(\theta^2\tau+2\theta x+2\theta)}\,
\frac{
F\!\left(\frac{x+\theta\tau}{\tau},-\frac{1}{\tau}\right)
+i\tau\,F(x+\theta\tau,\tau)
}{
\tau\,\theta_{11}(x+\theta\tau,\tau)
}\,.
\label{eq:Mordell_identity}
\end{equation}
The identity \eqref{eq:Mordell_identity} provides two complementary expansions: the term $F(x+\theta\tau,\tau)$ is a series in $q=e^{-\zeta/C}$ suited to large $\zeta$, while $F\bigl((x+\theta\tau)/\tau,-1/\tau\bigr)$ is an $S$-transformed series in $q^\ast=e^{-\pi^2 C/\zeta}$ suited to small $\zeta$.
We now carry out the $\partial_x^3$ differentiation and extract the explicit dual $q$-series for $h(\xi)$.

\medskip\noindent\textit{Holomorphicity at the degenerate point.}
Set $z=x+\theta\tau$, write $\tau_\ast:=-1/\tau$ for the modular-inverted parameter, and define
\begin{equation}
\Phi(z;\tau):=F\!\left(\frac{z}{\tau},\tau_\ast\right)+i\tau\,F(z,\tau)\,,
\qquad
R(z;\tau):=\frac{\Phi(z;\tau)}{\tau\,\theta_{11}(z,\tau)}\,.
\label{eq:Phi-R-def}
\end{equation}
The $S$-transformation identity for the Appell--Lerch sum implies that the
numerator vanishes at the same point where $\theta_{11}$ has its simple zero:
\begin{equation}
\Phi(0;\tau)=F(0,\tau_\ast)+i\tau\,F(0,\tau)=0\,.
\label{eq:Phi0-cancel}
\end{equation}
Since $\theta_{11}(z,\tau)$ has a simple zero at $z=0$, the cancellation~\eqref{eq:Phi0-cancel} implies
that $R(z;\tau)$ is holomorphic at $z=0$, despite the individual pole of
$\Phi/(\tau\theta_{11})$.

\medskip\noindent\textit{Extraction of $\partial_x^3$.}
From the Mordell identity~\eqref{eq:Mordell_identity},
\begin{equation}
\mathcal{M}(x,\theta;\tau)
=e^{-\pi i(\theta^2\tau+2\theta)}\,e^{-2\pi i\theta x}\,R(x+\theta\tau;\tau)\,.
\end{equation}
Since $R$ is holomorphic at the origin, the exponential prefactor
$e^{-2\pi i\theta x}$ contributes only subleading terms in $\theta$, and the
$\theta\to 0$ limit gives
\begin{equation}
\left.\partial_x^3\mathcal{M}(x,\theta;\tau)\right|_{x=0,\,\theta\to 0}
=R'''(0;\tau)\,.
\label{eq:d3M-Rppp}
\end{equation}
To compute $R'''(0)$, expand $\Phi$ and $\Theta\equiv\theta_{11}$ in Taylor series around $z=0$.
Write $\phi_j:=\tau^{-j}F^{(j)}(0,\tau_\ast)+i\tau\,F^{(j)}(0,\tau)$, so that
$\Phi(z)=\phi_1 z+\tfrac{1}{2}\phi_2 z^2+\tfrac{1}{6}\phi_3 z^3+\tfrac{1}{24}\phi_4 z^4+\cdots$\,.
Expanding $\Theta=\Theta_1 z+\tfrac{1}{6}\Theta_3 z^3+\cdots$ (only odd powers, since
$\theta_{11}$ is odd) and dividing, one obtains
\begin{equation}
R(z)=\frac{1}{\tau\Theta_1}\left[\phi_1+\frac{\phi_2}{2}z
+\left(\frac{\phi_3}{6}-\frac{\phi_1\Theta_3}{6\Theta_1}\right)z^2
+\left(\frac{\phi_4}{24}-\frac{\phi_2\Theta_3}{12\Theta_1}\right)z^3
+O(z^4)\right],
\label{eq:R-expansion}
\end{equation}
whence
\begin{equation}
R'''(0;\tau)=\frac{1}{\tau\Theta_1}
\left[\frac{\phi_4}{4}-\frac{\phi_2\Theta_3}{2\Theta_1}\right].
\label{eq:Rppp}
\end{equation}

\medskip\noindent\textit{Explicit $q/q^\ast$-series formula.}
Using the standard identity $\theta_{11}'''(0,\tau)/\theta_{11}'(0,\tau)=-\pi^2 E_2(\tau)$~\cite{Andrews:2018},
where $E_2(\tau)=1-24\sum_{n=1}^{\infty}\sigma_1(n)\,q^{2n}$ is the weight-2 Eisenstein series,
Eqs.~\eqref{eq:d3M-Rppp} and~\eqref{eq:Rppp} give
\begin{equation}
h(\xi)
=-\frac{\xi^2}{16\pi^3\,\tau\,\theta_{11}'(0,\tau)}
\Big[\phi_4+2\pi^2 E_2(\tau)\,\phi_2\Big]_{\,\tau=i\xi/\pi}\,,
\label{eq:h-Mordell-closed}
\end{equation}
with
\begin{align}
\phi_2&=\tau^{-2}F''(0,\tau_\ast)+i\tau\,F''(0,\tau)\,,\label{eq:phi2}\\
\phi_4&=\tau^{-4}F^{(4)}(0,\tau_\ast)+i\tau\,F^{(4)}(0,\tau)\,.\label{eq:phi4}
\end{align}

\begin{remark}
The final answer involves $F''$ and $F^{(4)}$, not $F'''$.
This is structural: the third $x$-derivative acts on a quotient whose denominator
vanishes linearly, so the $z^3$ coefficient of $R$ requires the $z^4$ coefficient
of $\Phi$.
\end{remark}

\medskip\noindent\textit{Explicit $q/q^\ast$-series form.}
For $\tau=i\xi/\pi$ we have $q=e^{-\xi}$, $q^\ast=e^{-\pi^2/\xi}$, and
$s:=\xi/\pi=-i\tau$.  Define
\begin{equation}
T(q):=2\sum_{m=0}^{\infty}(-1)^m(2m+1)\,q^{(m+1/2)^2}\,,
\label{eq:Tq-def}
\end{equation}
and, for $r=2,4$,
\begin{equation}
L_r(q):=\sum_{m=0}^{\infty}(-1)^m(2m+1)^r\,q^{(m+1/2)^2}
\,\frac{1-q^{2m+1}}{1+q^{2m+1}}\,.
\label{eq:Lr-def}
\end{equation}
Then $\theta_{11}'(0,\tau)=\pi T(q)$, and the Eisenstein series in the $q$-variable is
\begin{equation}
E_2(q):=1-24\sum_{n=1}^{\infty}\sigma_1(n)\,q^{2n}\,.
\label{eq:E2q-def}
\end{equation}
After reducing to real form,
\begin{equation}
h(\xi)=\frac{\pi^2 s}{16\,T(q)}
\left[
s^{-4}L_4(q^\ast)-s\,L_4(q)
+2E_2(q)\bigl(s^{-2}L_2(q^\ast)+s\,L_2(q)\bigr)
\right].
\label{eq:h-q-series}
\end{equation}
Equation~\eqref{eq:h-q-series} is our closed-form evaluation of $h(\xi)$, obtained by applying Mordell's identity to the exact integral~\eqref{eq:h-exact-normalized}.  The $q^\ast$-terms
($L_r(q^\ast)$) are the $S$-dual channel, exponentially small near the boundary
($\xi\ll 1$). The $q$-terms ($L_r(q)$) are exponentially small deep in the throat
($\xi\gg 1$).  The Eisenstein series $E_2$ mixes the two channels, reflecting the
quasi-modular (rather than strictly modular) nature of the weight-2 object.  We
emphasize that $E_2$ is not present in Mordell's classical
identity~\eqref{eq:Mordell_identity}, which involves only the Appell--Lerch and
theta functions.  It is generated here by the third derivative $\partial_x^3$,
which forces the third Taylor coefficient of the odd theta function $\theta_{11}$
at the origin and hence the ratio $\theta_{11}'''(0,\tau)/\theta_{11}'(0,\tau)=-\pi^2 E_2(\tau)$
[Eq.~\eqref{eq:h-Mordell-closed}].  The appearance of $E_2$ in the closed form
for $h$ is thus a feature of the present construction, not of the underlying
Mordell identity.

Direct summation of~\eqref{eq:h-q-series} reproduces the Gaussian-coth
integral to the full working precision of the numerical evaluation
($\approx 12$ significant figures). For instance, quoting twelve figures,
$h(0.5)=1.00337597688$, $h(3)=1.06251163839$, and
$h(10)=1.27670420355$.

Although the individual theta/Appell--Lerch sums $L_r(q^\ast)$ and $T(q)$ contain the standard
$q_\ast^{1/4}$-type theta prefactor, the same factor appears in $T(q)$ after modular
transformation and cancels in the ratio; the nonperturbative scale of the near-boundary
completion is therefore $q^\ast=e^{-\pi^2/\xi}$, in agreement with the optimal-truncation
estimate of Sec.~\ref{sec:exact-Taylor}.

The quasi-modular term has a direct interpretation: it is the curvature of the
$\theta_{11}$ zero that cancels the Appell--Lerch pole.  Thus the two channels are
not independent $q$- and $q^\ast$-series; they are mixed by the weight-two modular
connection~$E_2$.  This is the precise sense in which the exact throat geometry
is a quasi-modular, rather than strictly modular, completion of the near-boundary
expansion.

We note that the kernel $G_{\partial\partial}$ used throughout this paper is the strict $\beta\to\infty$ limit of the finite-temperature double-spectral representation of~\cite{Mertens:2017,Saad:2019lba}; thus the zero-temperature results form an exact subsector of the full thermal theory, and finite-temperature extensions correspond to controlled deformations of this limit rather than independent assumptions.

\subsection{General conformal dimension: ladders, Fermi towers, and infrared structure}
\label{sec:general-delta}

The same reduction applies to a Schwarzian bilocal of arbitrary conformal
dimension.  With the normalization chosen so that $h_\Delta(0)=1$, define
\begin{equation}
 h_\Delta(\xi)
 =\frac{\xi^{2\Delta}}{\pi^2\Gamma(2\Delta)}
 \int_0^\infty dy\; y\sinh(2\pi y)
 \left|\Gamma(\Delta+i y)\right|^{4}e^{-\xi y^2},
 \qquad \Delta>0 .
\label{eq:hDelta-def}
\end{equation}
For $\Delta=1$, the identity
$|\Gamma(1+i y)|^2=\pi y/\sinh(\pi y)$ reduces
\eqref{eq:hDelta-def} to the Gaussian-coth integral
\eqref{eq:h-exact-normalized}.  The recursion
\begin{equation}
 \left|\Gamma(\Delta+1+i y)\right|^{4}
 =\left(\Delta^2+y^2\right)^2
 \left|\Gamma(\Delta+i y)\right|^{4}
\label{eq:hDelta-ladder}
\end{equation}
turns each integer or half-integer tower into a finite differential ladder:
multiplication by $y^2$ inside the integral is $-\partial_\xi$, so the next rung
is obtained from the preceding seed by the operator
$(\Delta^2-\partial_\xi)^2$, followed only by the elementary change of the
normalizing factor $\xi^{2\Delta}/\Gamma(2\Delta)$.  Consequently any closed
$q/q^\ast$ representation for a seed propagates through the tower by
term-by-term differentiation.

For positive integers this gives the moment family
\begin{equation}
 h_n(\xi)
 :=\frac{\langle\mathcal{B}_1^{\,n}\rangle}
 {\langle\mathcal{B}_1\rangle_{\rm cl}^{\,n}}
 =\frac{2\,\xi^{2n}}{\Gamma(2n)}\int_0^\infty dy\;
 y^{3}\Bigl[\prod_{j=1}^{n-1}\bigl(j^{2}+y^{2}\bigr)\Bigr]^{2}
 e^{-\xi y^{2}}\,\coth(\pi y),
 \qquad h_n(0)=1 .
\label{eq:hn-def}
\end{equation}
For half-integers the seed is instead a Fermi-type kernel.  Since
$|\Gamma(1/2+i y)|^2=\pi/\cosh(\pi y)$,
\begin{equation}
 h_{1/2}(\xi)=2\xi\int_0^\infty dy\; y\tanh(\pi y)e^{-\xi y^2} .
\label{eq:hhalf-def}
\end{equation}
The near-boundary expansion follows by writing
$\tanh(\pi y)=1-2/(e^{2\pi y}+1)$; relative to the Bose integrals in
\eqref{eq:h-exact-Taylor}, the even zeta values are replaced by the Dirichlet
eta combination $(1-2^{1-2k})\zeta_{\rm R}(2k)$.  In particular
\begin{equation}
 h_{1/2}(\xi)=1-\frac{\xi}{12}+\frac{7\xi^2}{480}
 -\frac{31\xi^3}{8064}+O(\xi^4),
 \qquad \xi\to0^+ .
\label{eq:hhalf-UV}
\end{equation}
This is qualitatively different from the dimension-one kernel: the
half-integer seed initially decreases rather than grows.

The infrared endpoint is universal and is governed only by the small-$y$
expansion of the spectral density.  Expanding
$y\sinh(2\pi y)|\Gamma(\Delta+i y)|^4$ at fixed $\Delta$ gives
\begin{equation}
 h_\Delta(\xi)\simeq
 \frac{\Gamma(\Delta)^4}{\Gamma(2\Delta)}\,
 \frac{\xi^{2\Delta-3/2}}{2\sqrt{\pi}}
 \left[1+\frac{\pi^2-3\psi'(\Delta)}{\xi}+O(\xi^{-2})\right],
 \qquad \xi\to\infty .
\label{eq:hDelta-edge}
\end{equation}
Here $\psi'$ is the trigamma function.  At $\Delta=1$, where
$\psi'(1)=\pi^2/6$, Eq.~\eqref{eq:hDelta-edge} reduces to
\eqref{eq:h-IR-from-coth}.  The nonperturbative scales also have a simple
geography.  The poles of $|\Gamma(\Delta+i y)|^2$ at
$y=\pm i(\Delta+m)$ generate the deep-throat recessive scale
\begin{equation}
 q_\Delta=e^{-\Delta^2\xi},
 \qquad
 \xi_{\rm sd}(\Delta)=\frac{\pi}{\Delta},
\label{eq:qDelta-selfdual}
\end{equation}
where the second equation is the point at which $q_\Delta$ has the same size as
the UV modular scale $q^\ast=e^{-\pi^2/\xi}$.  In the integer tower the
polynomial zeros in \eqref{eq:hn-def} cancel the sub-$\Delta$ rungs of this pole
sequence, so the first surviving exponential is precisely $e^{-n^2\xi}$.

There is also a modular dichotomy.  Integer $\Delta$ sits at the degenerate
characteristic used above: the cancellation $\Phi(0;\tau)=0$ exposes the
quasi-modular connection term $E_2$.  Half-integer $\Delta$ instead sits at the
regular characteristic $\theta=1/2$; the corresponding Mordell integral belongs
to the family of mock theta functions classified by Zwegers~\cite{Zwegers:2002}
and transforms with a genuine weight-one-half $S$-factor before the differential
ladder is applied.  Thus the appearance of $E_2$ is a property of the
integer-dimension, degenerate-characteristic kernel, not an unavoidable feature
of every Schwarzian radial factor.

\subsection{Variance of the Schwarzian kernel and the status of the
averaged-metric prescription}
\label{sec:variance}

All Wilson-loop results in this paper, like those of~\cite{Liu:2024}, are
computed in the \emph{mean} geometry: the worldsheet functional is evaluated on
the quantum-averaged metric \eqref{eq:quantum-ads2}--\eqref{eq:quantum-ads5},
i.e.\ we compute $W[\langle g\rangle]$. The JT path integral itself defines
instead $\langle W[g]\rangle$, the Schwarzian average of the worldsheet
observable, and because the Nambu--Goto action is nonlinear in the metric the
two differ. Schematically, at fixed embedding,
\begin{equation}
\log\langle W\rangle
= -\,S_{\rm NG}[\langle g\rangle]
-\Bigl(\langle S_{\rm NG}[g]\rangle - S_{\rm NG}[\langle g\rangle]\Bigr)
+\tfrac12\,\mathrm{Var}\,S_{\rm NG}[g]\;+\;\cdots,
\label{eq:logW-cumulants}
\end{equation}
where both correction terms are quadratic in metric fluctuations at leading
order and are controlled by connected correlators of the Schwarzian bilocal at
split arguments, which are exactly known~\cite{Mertens:2017,Stanford:2017thb}
(re-extremization of the embedding contributes at the same order). Both
corrections raise $\log\langle W\rangle$: the variance is nonnegative, and
concavity of $\sqrt{\det g}$ on positive metrics gives
$\langle S_{\rm NG}[g]\rangle\le S_{\rm NG}[\langle g\rangle]$ at fixed
embedding, so to this order metric fluctuations \emph{deepen} the potential
rather than stiffen it; the mean-geometry estimate, if anything, underestimates
screening. The exposure to the difference is largest precisely where our main
result lives, deep in the throat. In this subsection we quantify it: the
variance of the very kernel that defines $h$ is exactly computable within the
same Gaussian-coth/Mordell toolbox, and it delimits where the mean geometry is
a controlled proxy for the fluctuating one.

\medskip\noindent\textit{Exact variance of the defining kernel.}
The kernel \eqref{eq:Gpp-original} is the zero-temperature expectation value of
the dimension-one Schwarzian bilocal
\begin{equation}
\mathcal{B}_\Delta(t_1,t_2)\;:=\;
\left[\frac{f'(t_1)\,f'(t_2)}{\bigl(f(t_1)-f(t_2)\bigr)^{2}}\right]^{\Delta},
\qquad
G_{\partial\partial}(2\zeta)\;\propto\;\bigl\langle\mathcal{B}_1(2\zeta)\bigr\rangle\,,
\label{eq:bilocal-functional}
\end{equation}
where $f$ is the boundary reparametrization and $\langle\,\cdot\,\rangle$ the
Schwarzian path integral. Geometrically $\mathcal{B}_1=e^{-\ell_{\rm ren}}$,
with $\ell_{\rm ren}$ the renormalized geodesic length between the two boundary
points, so that $h(\zeta)=\langle e^{-(\ell_{\rm ren}-\ell_{\rm cl})}\rangle$
is an \emph{annealed} average over length fluctuations. Because
$\mathcal{B}_\Delta$ is a functional of $f$ rather than an operator with
nontrivial ordering, its square at coincident endpoints is literally the
dimension-two bilocal, $\mathcal{B}_1^{\,2}=\mathcal{B}_2$, and the variance of
the defining kernel is the $\Delta=2$ member of the exactly solvable
family~\cite{Mertens:2017,Stanford:2017thb,Yang:2018qgravity}. Since
$|\Gamma(2+iy)|^{2}=(1+y^{2})\,|\Gamma(1+iy)|^{2}$, the reduction of
Sec.~\ref{sec:mordell-metric} applies verbatim with the polynomial
$y^{3}\mapsto y^{3}(1+y^{2})^{2}$, and the normalized second moment is
\begin{equation}
h_2(\xi)\;:=\;\frac{\langle\mathcal{B}_2(2\zeta)\rangle}
{\mathcal{B}_2^{\rm cl}(2\zeta)}
\;=\;\frac{\xi^{4}}{3}\int_0^\infty dy\,
\bigl(y^{3}+2y^{5}+y^{7}\bigr)\,e^{-\xi y^{2}}\,\coth(\pi y)\,,
\qquad h_2(0)=1\,.
\label{eq:h2-coth}
\end{equation}
The $y^{5}$ and $y^{7}$ moments are the derivative moments
$(-1)^{n}\mathcal{G}_0^{(n)}(\xi)=2\int_0^\infty dy\,y^{3+2n}e^{-\xi y^{2}}
\coth(\pi y)$ of the confinement indicator
$\mathcal{G}_0(\xi):=h(\xi)/\xi^{2}$ (cf.\
Theorem~\ref{thm:complete-monotone}), so the relative variance has the closed
form
\begin{equation}
\mathcal{V}(\xi)\;:=\;
\frac{\bigl\langle\mathcal{B}_1^{\,2}\bigr\rangle-\bigl\langle\mathcal{B}_1\bigr\rangle^{2}}
{\bigl\langle\mathcal{B}_1\bigr\rangle^{2}}
\;=\;\frac{h_2(\xi)}{h(\xi)^{2}}-1
\;=\;\frac{\mathcal{G}_0(\xi)-2\,\mathcal{G}_0'(\xi)+\mathcal{G}_0''(\xi)}
{6\,\mathcal{G}_0(\xi)^{2}}\;-\;1\,,
\label{eq:var-exact}
\end{equation}
an exact identity at all radial depths. Each moment is a higher $x$-derivative of the
Mordell integral at the degenerate point [cf.\
\eqref{eq:moment_dictionary}], so $\mathcal{V}$ inherits a closed
$q/q^{\ast}$-series representation of precisely the type
\eqref{eq:h-q-series}, with $\partial_x^{5}$ and $\partial_x^{7}$ replacing
$\partial_x^{3}$; we do not display it.

\medskip\noindent\textit{Asymptotics and crossover.}
Near the boundary the polynomial Gaussian moments in \eqref{eq:h2-coth} are
exact and terminate, $h_2(\xi)=1+\tfrac{2}{3}\xi+\tfrac{1}{6}\xi^{2}
+\mathcal{O}(\xi^{4})$, with corrections of order $\xi^{4}$ from the Bose tail
$\coth(\pi y)-1$, in the same asymptotic sense as \eqref{eq:h-exact-Taylor};
combining with \eqref{eq:h-exact-Taylor},
\begin{equation}
\mathcal{V}(\xi)\;=\;\frac{2}{3}\,\xi+\frac{2}{15}\,\xi^{2}
-\frac{2}{315}\,\xi^{3}+\mathcal{O}(\xi^{4})\,,
\qquad \xi\to0^{+}.
\label{eq:var-UV}
\end{equation}
Relative fluctuations of the kernel grow \emph{linearly} in $\zeta/C$ --- the
expected one-loop behavior of the weakly coupled Schwarzian mode, here with
exact coefficient $2/3$. Deep in the throat, the endpoint analysis of
Sec.~\ref{sec:IR-asymp} applied to numerator and denominator of
\eqref{eq:var-exact} gives
\begin{equation}
\mathcal{V}(\xi)\;\simeq\;\frac{\sqrt{\pi}}{3}\,\xi^{3/2}
\;-\;\sqrt{\pi}\Bigl(\frac{\pi^{2}}{6}-1\Bigr)\xi^{1/2}+\cdots
\;\;\simeq\;\;\frac{1}{6\,\mathcal{G}_0(\xi)}\,,
\qquad \xi\to\infty\,,
\label{eq:var-IR}
\end{equation}
the last form holding because $\mathcal{G}_0'$ and $\mathcal{G}_0''$ are
subleading at the spectral edge. Numerically (Fig.~\ref{fig:variance}; same quadrature as
Appendix~\ref{app:h-eval}): $\mathcal{V}(1)=0.7957$; $\mathcal{V}=1$ at
$\xi_{\rm var}\simeq 1.216$; $\mathcal{V}(\pi)\simeq 3.31$ at the self-dual
point; $\mathcal{V}/\xi^{3/2}\to\sqrt{\pi}/3=0.5908$ (reaching $0.5907$ by
$\xi=10^{4}$), and the two-term form \eqref{eq:var-IR} is accurate to
$3\times10^{-5}$ relative at $\xi=10^{3}$. Kernel fluctuations thus become
comparable to the mean at $\zeta\simeq 1.2\,C$ --- essentially the scale at
which the truncated series of~\cite{Liu:2024} loses control --- and dominate
beyond. In particular, at $\xi\simeq 4.19$, where the truncation places its
spurious confining minimum (Fig.~\ref{fig:G0_compare}), the relative variance
is already $\mathcal{V}\simeq 4.9$: the putative confining wall sits in a
region where the geometry is fluctuation-dominated, compounding the truncation
artifact established in Theorem~\ref{thm:complete-monotone}.

\begin{figure}[t]
  \centering
  \includegraphics[width=0.65\linewidth]{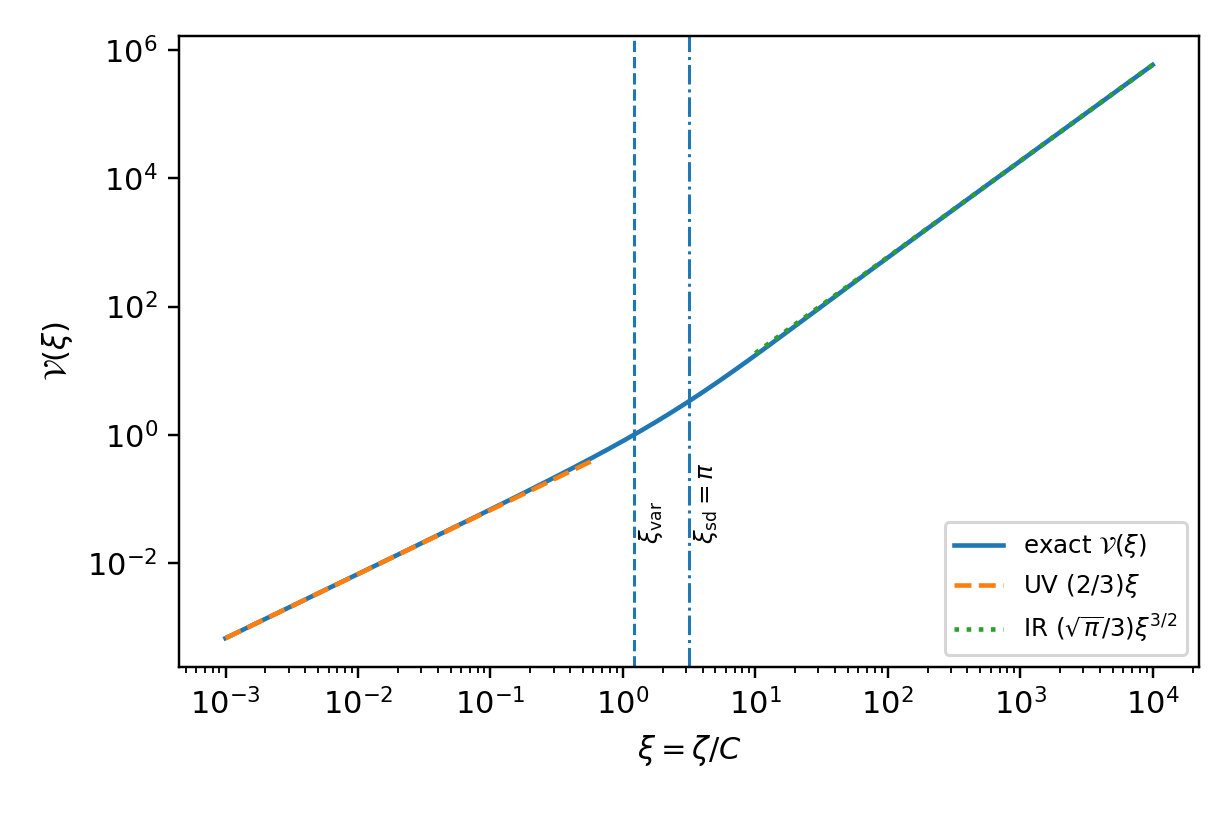}
  \caption{Relative variance $\mathcal{V}(\xi)$ of the Schwarzian kernel.  The
  UV line is the exact leading law $\mathcal{V}\simeq (2/3)\xi$, while the IR
  line is $\mathcal{V}\simeq (\sqrt{\pi}/3)\xi^{3/2}$.  The vertical lines mark
  the point $\xi_{\rm var}\simeq1.216$ where the variance equals the squared
  mean and the self-dual depth $\xi_{\rm sd}=\pi$.}
  \label{fig:variance}
\end{figure}

\medskip\noindent\textit{All moments, spectral-edge universality, and
non-self-averaging.}
The integer moment formula \eqref{eq:hn-def} shows that every moment remains in
the Gaussian-coth, hence Mordell, class.  Deep in the throat all these moments
are controlled by the same spectral edge summarized in \eqref{eq:hDelta-edge}:
although $h_n(\xi)\sim \xi^{2n-3/2}$ after classical normalization,
$\langle\mathcal{B}_1^{\,n}\rangle$ itself falls as $c_n\zeta^{-3/2}$ for
\emph{every} $n$.  Therefore all normalized moment ratios diverge,
\begin{equation}
 \frac{\langle\mathcal{B}^{n}\rangle}{\langle\mathcal{B}\rangle^{n}}
 \sim \xi^{3(n-1)/2},
 \qquad \xi\to\infty .
\label{eq:moment-ratio-IR}
\end{equation}
The distribution of $e^{-\ell_{\rm ren}}$ deep in the throat is heavy-tailed:
its mean is dominated by rare, anomalously short renormalized geodesics, while
by Jensen's inequality~\cite{HardyLittlewoodPolya:1952} $\langle e^{-\ell}\rangle\ge e^{-\langle\ell\rangle}$
the typical throat is \emph{longer} than the annealed geometry suggests.  The
kernel is not self-averaging for $\zeta\gtrsim C$, and
\eqref{eq:quantum-ads2} should be understood there as the mean of a broad
distribution rather than the metric seen by a typical configuration.

\medskip\noindent\textit{What is prescription-robust.}
It is natural to ask which conclusions survive once the averaged-metric
(annealed, $\Delta=1$) choice is relaxed within this family. Repeating the
construction of Sec.~\ref{sec:geometry} with a dimension-$\Delta$ bilocal ---
reading the Weyl factor from $\zeta^{2\Delta}\langle\mathcal{B}_\Delta(2\zeta)
\rangle$ through the same geodesic dictionary, under which a classical
dimension-$\Delta$ two-point function on the dressed metric scales as
$h(\zeta)^{\Delta}/(\Delta t)^{2\Delta}$ --- assigns the metric factor
$h_\Delta(\xi)^{1/\Delta}$. The answer then separates cleanly.

\begin{proposition}[Moment-robust absence of a confining minimum]
\label{prop:moment-robust}
For every $\Delta>0$ let
$h_\Delta(\xi):=\langle\mathcal{B}_\Delta(2\zeta)\rangle/
\mathcal{B}_\Delta^{\rm cl}(2\zeta)$ denote the normalized $\Delta$-moment of
the bilocal, and let $h_\Delta^{1/\Delta}$ be the metric factor a
dimension-$\Delta$ probe assigns. Then
\begin{equation}
\mathcal{G}_0^{(\Delta)}(\xi)\;:=\;\frac{h_\Delta(\xi)^{1/\Delta}}{\xi^{2}}
\label{eq:G0-Delta-def}
\end{equation}
is strictly decreasing on $(0,\infty)$. In particular, no moment-based
effective geometry in this family develops a finite-depth minimum of the
effective string tension.
\end{proposition}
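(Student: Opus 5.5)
The plan is to reduce the statement to a sign property of an explicit Laplace-type integral, exactly as in the $\Delta=1$ case of Theorem~\ref{thm:complete-monotone}. First I remove the $1/\Delta$ power. Since $h_\Delta(\xi)>0$ and $x\mapsto x^{1/\Delta}$ is strictly increasing on $(0,\infty)$, we have
\[
\mathcal{G}_0^{(\Delta)}(\xi)=\Bigl(h_\Delta(\xi)\,\xi^{-2\Delta}\Bigr)^{1/\Delta}.
\]
So it suffices to show that $g_\Delta(\xi):=h_\Delta(\xi)/\xi^{2\Delta}$ is strictly decreasing on $(0,\infty)$. By the definition~\eqref{eq:hDelta-def}, the factor $\xi^{2\Delta}$ cancels exactly:
\[
g_\Delta(\xi)=\frac{1}{\pi^2\Gamma(2\Delta)}\int_0^\infty dy\;\rho_\Delta(y)\,e^{-\xi y^2},
\qquad
\rho_\Delta(y):=y\sinh(2\pi y)\,\bigl|\Gamma(\Delta+iy)\bigr|^4 .
\]
This is a Laplace transform in the variable $y^2$ of a strictly positive density, since $\rho_\Delta(y)>0$ for $y>0$.

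Next I justify convergence and differentiation under the integral sign. Near $y=0$, $\rho_\Delta(y)\simeq 2\pi y^2\Gamma(\Delta)^4$, which is integrable. For large $y$, Stirling's formula gives $|\Gamma(\Delta+iy)|^2\sim 2\pi\, y^{2\Delta-1}e^{-\pi y}$. Hence $\rho_\Delta(y)\sim 2\pi^2\,y^{4\Delta-1}$, which is only polynomial growth, because the $e^{-2\pi y}$ decay exactly compensates $\sinh(2\pi y)$. On any half-line $\xi\ge\xi_0>0$, the functions $y^{2k}\rho_\Delta(y)e^{-\xi y^2}$ are dominated by the integrable function $y^{2k}\rho_\Delta(y)e^{-\xi_0 y^2}$. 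Dominated convergence then allows differentiation under the integral to all orders:
\[
(-1)^k g_\Delta^{(k)}(\xi)=\frac{1}{\pi^2\Gamma(2\Delta)}\int_0^\infty dy\;y^{2k}\rho_\Delta(y)\,e^{-\xi y^2}>0 .
\]
In particular $g_\Delta'(\xi)<0$ for every $\xi>0$, because the integrand is strictly positive on $(0,\infty)$. Hence $g_\Delta$ is strictly decreasing, and so is $\mathcal{G}_0^{(\Delta)}=g_\Delta^{1/\Delta}$. Indeed $g_\Delta$ is completely monotone. The power $1/\Delta$ of a completely monotone function is in general only log-convex and decreasing, which is why the statement claims only strict monotonicity. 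Strict decrease rules out any interior critical point, so no finite-depth minimum of the effective tension can occur.

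The only step needing care is the large-$y$ estimate, which guarantees that the Gaussian dominates uniformly in $\xi\ge\xi_0$. I would do this with the standard uniform bound $|\Gamma(\Delta+iy)|\le C_\Delta (1+y)^{\Delta-1/2}e^{-\pi y/2}$ for $y\ge0$. Everything else is positivity of the spectral density, the same mechanism as in Theorem~\ref{thm:complete-monotone}.
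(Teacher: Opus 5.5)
Your proof is correct and follows essentially the same route as the paper: both write $\mathcal{G}_0^{(\Delta)}$ as the $1/\Delta$ power of a Laplace transform of a strictly positive spectral density, and conclude by composing its strict decrease with the increasing map $x\mapsto x^{1/\Delta}$. The paper works in the variable $\omega$, you work in $y^2$ via \eqref{eq:hDelta-def}, and your Stirling bound justifying convergence and differentiation under the integral is a detail the paper leaves implicit.
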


\begin{proof}
$h_\Delta(\xi)^{1/\Delta}/\xi^{2}
=c_\Delta\,\langle\mathcal{B}_\Delta(2\zeta)\rangle^{1/\Delta}$ with
$c_\Delta>0$, and
$\langle\mathcal{B}_\Delta(2\zeta)\rangle
=\int_0^\infty d\omega\,\rho_\Delta(\omega)\,e^{-2\zeta\omega}$ with
$\rho_\Delta(\omega)\propto
\sinh\bigl(2\pi\sqrt{2C\omega}\bigr)\,
\bigl|\Gamma\bigl(\Delta+i\sqrt{2C\omega}\bigr)\bigr|^{4}\ge0$
[the $\Delta$-generalization of \eqref{eq:Gpp-original}] is the Laplace
transform of a nonzero positive measure, hence strictly decreasing in $\zeta$;
composition with the increasing map $x\mapsto x^{1/\Delta}$ preserves strict
decrease.
\end{proof}

The conclusion that the confining minimum of the truncated series is an
artifact is therefore independent of which moment of the fluctuating geometry
one declares to be ``the'' metric. The infrared \emph{exponent}, by contrast,
is not: from \eqref{eq:hDelta-edge},
\begin{equation}
h_\Delta(\xi)^{1/\Delta}\;\sim\;\xi^{\,a(\Delta)},
\qquad a(\Delta)\;=\;2-\frac{3}{2\Delta}\,,
\label{eq:aDelta}
\end{equation}
so the screening law depends on the probe dimension through the
general-exponent analysis of Sec.~\ref{sec:effective-tension}
[cf.\ \eqref{eq:L-exponent-general}]: $\Delta=1$, the kernel of~\cite{Liu:2024}
and of this paper, gives $a=\tfrac12$ and $E\sim-\kappa_{\rm IR}/L^{2}$, while
already at $\Delta=2$ one finds $a=\tfrac54>1$, for which the renormalized
energy integral over the throat (Sec.~\ref{sec:turning-point}) no longer
converges and the structure of the IR analysis changes qualitatively. By
Lyapunov's moment inequality~\cite{HardyLittlewoodPolya:1952}, $h_\Delta^{1/\Delta}$ is nondecreasing in
$\Delta$ at fixed $\xi$: lower moments, which weight typical rather than rare
configurations, are \emph{more} suppressed deep in the throat. Within this
family, the breakdown of mean-field therefore cannot resurrect confinement;
what it relaxes is the specific $L^{-2}$ power, which should be read as an
exact statement about the annealed $\Delta=1$ geometry.

\medskip\noindent\textit{Isotropy, the $2\zeta$ dictionary, and what is
calibrated rather than predicted.}
Two further structural choices enter
\eqref{eq:quantum-ads2}--\eqref{eq:quantum-ads5} and deserve to be explicit.
First, the isotropy of the Weyl factor --- the same $h$ multiplying $dt^{2}$
and $d\zeta^{2}$ --- involves no loss of generality: every static
two-dimensional metric is conformally flat, so isotropy is a choice of frame.
The physical input is the choice of conformal frame itself, made by applying
the \emph{classical} geodesic dictionary --- a boundary bilocal of separation
$\Delta t$ probes maximal Poincar\'e depth $\zeta=\Delta t/2$ --- to the
\emph{quantum} kernel, $h(\zeta)\propto\zeta^{2}G_{\partial\partial}(2\zeta)$.
An $\mathcal{O}(1)$ rescaling of this dictionary, $\Delta t=2\zeta/\lambda$,
acts as $\zeta\to\lambda\zeta$, i.e.\ as a redefinition of $C$. The scaling
exponents central to the rectangular-loop analysis --- the $\xi^{1/2}$ growth
of $h$, the monotone decay of $\mathcal{G}_0$, and the $L^{-2}$ screening law
--- are invariant under this rescaling, but calibrated scales are not: the
crossover $L_c^{\rm match}=82.4685\,C$, the self-dual depth
$\zeta_{\rm sd}=\pi C$, and the coefficient $\kappa_{\rm IR}$ of
\eqref{eq:kappa-IR-restored} all shift, and should be quoted as
prescription-calibrated quantities rather than invariant predictions.

Second,
the transverse metric: as emphasized in Sec.~\ref{sec:effective-tension}, $h$
dresses $g_{tt}$ and $g_{UU}$ but not $g_{xx}$, whose $T^{3}$ volume sits in
the dilaton sector of the reduction. Since the screening exponent follows from
$g_{tt}$ \emph{together with} the $\zeta$-independence of $g_{xx}$, an
independent quantum dressing of the dilaton sector could shift the exponent;
this --- not the isotropy of the AdS$_2$ factor --- is the main structural
assumption beyond mean-field, inherited from the matching ansatz
of~\cite{Liu:2024}. Finally, the bilocal-as-metric construction repackages
averaged boundary-anchored geodesic data as an effective line element;
gauge-invariant formulations of bulk observables and matter Wilson lines in JT
gravity~\cite{Blommaert:2018oro,Blommaert:2019hjr,Mertens:2022review} and the
ensemble treatment of the near-extremal
spectrum~\cite{Iliesiu:2020qvm,Saad:2019lba} provide the natural framework
for going beyond it. The fully averaged observable $\langle W\rangle$ itself
--- the Schwarzian average of the worldsheet functional, with the connected
kernels of \eqref{eq:logW-cumulants} evaluated at split arguments --- is an
interesting problem that we leave for future work.

\medskip\noindent
In summary, the averaged-metric prescription is quantitatively controlled for
$\zeta\lesssim C$, where $\mathcal{V}(\xi)\approx\tfrac{2}{3}\,\xi\ll1$ and the
mean geometry is a faithful proxy for the typical one. Deep in the throat the
kernel is fluctuation-dominated, and the rectangular-loop results of
Sec.~\ref{sec:rectangular} should be read as statements about the annealed mean geometry: their qualitative content ---
monotone tension decay, absence of a confining wall, screening rather than
confinement --- is robust across the moment family by
Proposition~\ref{prop:moment-robust}, while quantitative coefficients and the
precise $L^{-2}$ power are tied to the $\Delta=1$ prescription. The variance
\eqref{eq:var-exact}, being itself exactly Mordell, makes this error budget
computable rather than conjectural.

\section{Rectangular Wilson loop and the quark-antiquark potential}
\label{sec:rectangular}

We now study the temporal rectangular Wilson loop in the quantum-corrected metric~\eqref{eq:quantum-ads5}, within the same averaged-metric probe approximation adopted in~\cite{Liu:2024} and following the general framework of~\cite{Maldacena:1998im,Rey:1998ik,Drukker:1999zq,Kinar:1999,Sonnenschein:2000qm}. We first derive the parametric integrals for the quark separation $L$ and the static energy $E$, perform the standard ultraviolet (UV) subtraction, and then use the exact integral representation of $h(\zeta)$ to determine the large-distance behavior of the quark-antiquark potential.

\subsection{Parametric formulas and renormalization}
\label{sec:renorm-energy}

We now consider a temporal rectangular Wilson loop extended for a time interval $T$ (eventually $T\to\infty$) and with spatial separation $L$ between the quark and antiquark. In the quantum-corrected metric \eqref{eq:quantum-ads5}, choosing worldsheet coordinates $\tau=t$, $\sigma=x$, and an embedding $U=U(x)$, the induced metric leads to the Nambu--Goto action~\cite{Kinar:1999}
\begin{equation}
  S_{\rm NG} =
  \frac{T}{2\pi}\int_{-L/2}^{L/2} dx\,
  \sqrt{h(U)^2(\partial_x U)^2
  + \frac{U^4}{R^4}f(U)h(U)}.
\end{equation}
There is a conserved quantity associated with translations in $x$. Introducing $v\equiv U/U_0$ and the shorthand $f_v(v) = f(U_0 v)$, $h_v(v) = h(U_0 v)$, one finds that the separation $L$ and the static energy $E=S_{\rm NG}/T$ can be written as parametric integrals in terms of $v$:
\begin{align}
  \frac{L}{2} &=
  \frac{R^2}{U_0}
  \int_1^\infty dv\,
  \frac{\sqrt{h_v(v)}}{v^2\sqrt{f_v(v)}}
  \sqrt{\frac{f_v(1)h_v(1)}{v^4 f_v(v)h_v(v)-f_v(1)h_v(1)}},
  \label{eq:L-parametric}\\[4pt]
  E &=
  \frac{U_0}{\pi}\left[
    \int_1^\infty dv
    \left(
      \frac{v^2\,h_v(v)\,\sqrt{f_v(v)h_v(v)}}{\sqrt{v^4 f_v(v)h_v(v)-f_v(1)h_v(1)}}
      - h_v(v)
    \right)
    - \int_{U_T/U_0}^{1} dv\,h_v(v)
  \right].
  \label{eq:E-parametric}
\end{align}
Eliminating $U_0$ between $L$ and $E$ yields the quark-antiquark potential~\cite{Kinar:1999}.

The connected worldsheet action in \eqref{eq:L-parametric}--\eqref{eq:E-parametric} contains the
standard ultraviolet divergence associated with the infinite masses of the external sources~\cite{Maldacena:1998im,Rey:1998ik}.
We regulate the radial integral by cutting off the geometry at $U=U_{\max}$ (equivalently
$\zeta=\epsilon\ll 1$ in the AdS$_2$ coordinate), and define the renormalized potential by subtracting
two straight strings.
Writing the on-shell energy of the connected configuration as an integral over $U$,
\begin{equation}
E_{\rm conn}(U_{\max};U_0)
=
\frac{1}{\pi}\int_{U_0}^{U_{\max}} dU\;
\frac{h(U)}{\sqrt{1-\dfrac{U_0^4 f(U_0)h(U_0)}{U^4 f(U)h(U)}}}\,,
\label{eq:Econn-cutoff}
\end{equation}
where $U_0$ is the turning point.
A straight string at fixed $\vec x$ has induced metric $\gamma_{tt}=g_{tt}$ and $\gamma_{UU}=g_{UU}$, hence
\begin{equation}
\sqrt{\det\gamma}=\sqrt{g_{tt}g_{UU}}=\alpha'\,h(U)\,,
\end{equation}
so the regulated mass of a single external source is
\begin{equation}
m_W(U_{\max})=\frac{1}{2\pi}\int_{U_T}^{U_{\max}} dU\;h(U),
\qquad
2m_W(U_{\max})=\frac{1}{\pi}\int_{U_T}^{U_{\max}} dU\;h(U).
\label{eq:mW-cutoff}
\end{equation}
We therefore define the renormalized potential by
\begin{equation}
E(U_0)\equiv
\lim_{U_{\max}\to\infty}\Big(E_{\rm conn}(U_{\max};U_0)-2m_W(U_{\max})\Big).
\label{eq:Eren-def}
\end{equation}
Using \eqref{eq:Econn-cutoff}--\eqref{eq:mW-cutoff} one can rearrange the difference \emph{exactly} as
\begin{equation}
E(U_0)
=
\frac{1}{\pi}\int_{U_0}^{\infty} dU\;
\bigg[
\frac{h(U)}{\sqrt{1-\dfrac{U_0^4 f(U_0)h(U_0)}{U^4 f(U)h(U)}}}-h(U)
\bigg]
\;-\;
\frac{1}{\pi}\int_{U_T}^{U_0} dU\;h(U).
\label{eq:Eren-finite}
\end{equation}
The first integral is UV-finite: at large $U$ one has $f(U)\to 1$ and $h(U)\to 1$, and the bracket admits the expansion
\begin{equation}
\frac{h(U)}{\sqrt{1-\dfrac{U_0^4 f(U_0)h(U_0)}{U^4 f(U)h(U)}}}-h(U)
=
\frac{U_0^4 f(U_0)h(U_0)}{2U^4}+O(U^{-6}),
\label{eq:UV-decay}
\end{equation}
which is integrable at $U=\infty$.

To display the cancellation in the $\zeta$-cutoff language, use the near-boundary series
$h(\zeta)=1+O((\zeta/C)^2)$ from Eq.~\eqref{eq:h-small-zeta} and the matching relation
$U-U_T=R^2/(12\zeta)$ (Eq.~\eqref{eq:zetaUmap}), so that $dU=-(R^2/12)\,d\zeta/\zeta^2$.
The UV cutoff $U_{\max}$ maps to a small $\zeta$-cutoff $\epsilon>0$, and the integral from some
finite $U$ to the cutoff becomes
\begin{equation}
\int^{U_{\max}} dU\,h(U)
\;=\;
\frac{R^2}{12}\int_{\epsilon}^{\zeta_{\rm fin}} \frac{d\zeta}{\zeta^2}\;\bigl[1+O\bigl((\zeta/C)^2\bigr)\bigr]
\;=\;
\frac{R^2}{12}\bigg(\frac{1}{\epsilon}+O(1)\bigg),
\label{eq:UV-pole}
\end{equation}
where $\zeta_{\rm fin}$ is a fixed finite value (set by $U_T$ or $U_0$) that does not contribute to
the divergence. Since both $E_{\rm conn}(U_{\max};U_0)$ and $2m_W(U_{\max})$ contain the same $1/\epsilon$ pole,
\eqref{eq:Eren-def} is finite.
The renormalized potential~\eqref{eq:Eren-finite} receives contributions of competing sign: the first integral is positive while the subtraction term is negative. Both contribute at the same $\zeta_0^{-1/2}$ order in the deep IR; the full numerical
coefficient of the leading deep-IR behavior is therefore determined only after combining both
terms in the throat analysis (Sec.~\ref{sec:turning-point}). Combining the two throat contributions gives a negative overall coefficient (confirmed numerically in Sec.~\ref{subsec:numerical-branch-selection}, where $E<0$ throughout the sampled branch); the subtraction term provides the simplest way to exhibit the scaling:
\begin{equation}
E(U_0)\;\sim\;
-\frac{1}{\pi}\int_{U_T}^{U_0} dU\;h(U)\;<\;0.
\end{equation}
Using the asymptotic~\eqref{eq:h-IR-from-coth} together with
$U-U_T\propto \zeta^{-1}$ in the AdS$_2$ throat (Eq.~\eqref{eq:zetaUmap}), one has
\begin{equation}
h(U)\;\sim\;A\,(U-U_T)^{-1/2}\qquad(U\to U_T^+),
\end{equation}
and therefore
\begin{equation}
E(U_0)\;\sim\;-\frac{A}{\pi}\int_{0}^{U_0-U_T}\!dw\;w^{-1/2}
\;=\;-\frac{2A}{\pi}\,\sqrt{U_0-U_T}
\;\propto\;
-\zeta_0^{-1/2},
\label{eq:Eren-IR-scaling}
\end{equation}
where $\zeta_0$ is the turning point in AdS$_2$ coordinates.
A detailed turning-point derivation, establishing the relation $L\propto \zeta_0^{1/4}$ and hence
$E(L)\propto -L^{-2}$, is given in Sec.~\ref{sec:turning-point}.

\subsection{Short distances: Coulomb, analytic series, and nonperturbative corrections}
\label{sec:short-distance}

At short distances the turning point $U_0$ lies far from the horizon, $h\to 1$,
and the standard conformal Coulomb result~\cite{Maldacena:1998im,Rey:1998ik} is
recovered:
\begin{equation}
  E(L)\;\underset{L\to 0}{\sim}\;
  -\kappa\,\frac{R^2}{L}\,,
  \qquad
  \kappa \;=\;
  \frac{4\pi^2}{\Gamma(1/4)^4}\;\simeq\; 0.2285\,.
  \label{eq:Coulomb-coeff}
\end{equation}
As $L$ grows and the turning point enters the near-horizon region, the quantum
corrections encoded in $h(\zeta)$ deform this Coulomb tail.  The exact integral
representation implies the following structure for the small-$L$ expansion of the potential, generalizing the analysis of~\cite{Liu:2024}:
\begin{equation}
  E(L) = -\kappa\frac{R^2}{L}
  + \sigma_{\rm rect}\,\frac{R^2}{C^2}L
  + \gamma_3\,\frac{R^2}{C^4}L^3
  + \gamma_5\,\frac{R^2}{C^6}L^5
  + \cdots
  + \delta_{\rm mod}(L).
  \label{eq:E-smallL}
\end{equation}
The polynomial part is determined entirely by the near-boundary
expansion~\eqref{eq:h-exact-Taylor}.  In the smooth matched ansatz
\eqref{eq:smooth-matched-ansatz}, the leading short-distance deformation is
fixed by
\begin{equation}
        h(U)=1+\frac{\beta}{U^2}+O(U^{-3}),
        \qquad
        \beta=\frac{c_2}{144}=\frac{1}{8640},
\label{eq:beta-short-distance}
\end{equation}
where $\xi=1/[12(U-U_T)]$ and $U_T=R=C=1$.  Expanding the parametric
integrals to first order in $\beta/U_0^2$ gives a matched-ansatz calibration of
the linear coefficient.  With
\begin{align}
 I_0&:=\int_1^\infty\frac{dv}{v^2\sqrt{v^4-1}},
        \qquad \ell_0:=2I_0,
\label{eq:I0-short-distance}\\
 I_1&:=\frac12\int_1^\infty
 \frac{dv}{v^2\sqrt{v^4-1}}
 \left(1+\frac{1}{v^2}-\frac{1}{v^2+1}\right),
\label{eq:I1-short-distance}
\end{align}
one finds
\begin{equation}
 \sigma_{\rm rect}^{\rm match}
 =\frac{c_2}{144}
 \left(\frac{2I_1}{\pi\ell_0}+\frac{1}{4\pi}\right)
 =3.3593845\times10^{-5}.
\label{eq:sigma-rect-match}
\end{equation}
This is the term that, in finite-order analyses such as~\cite{Liu:2024}, is
identified with a string tension reminiscent of the Cornell potential
\cite{Eichten:1975,Eichten:1978,Bali:2000gf}.  Its value is not universal: it
is a calibration of the chosen matching prescription, whereas the $L^{-2}$ tail
below is fixed by the exact throat.

Our refined expansion~\eqref{eq:E-smallL} is consistent with the short-distance
matching, but the linear term is not fundamental.  Higher analytic corrections
modify the Cornell form~\cite{Eichten:1975,Eichten:1978} outside the strict UV regime, and at large distances the
potential crosses over to the screened regime derived in the next subsection.

The correction $\delta_{\rm mod}(L)$ represents the effect of exponentially
suppressed terms of order $e^{-\pi^2 C/\zeta}$ in $h(\zeta)$, which arise from
the $S$-transformed ($q^\ast$) channel of the Mordell evaluation
(Sec.~\ref{sec:mordell-eval}, Eq.~\eqref{eq:h-q-series}) and are invisible to any finite Taylor truncation
in $\zeta/C$.  Their structure in terms of the Appell--Lerch sums $L_r(q^\ast)$ is
made explicit by the $q^\ast$-dependent terms in~\eqref{eq:h-q-series}, but the precise functional form of
$\delta_{\rm mod}$ as a function of $L$ depends on the mapping between the
AdS$_2$ coordinate $\zeta$ and the boundary separation $L$, which in turn relies
on the matching of the near-horizon and far metrics~\cite{Liu:2024}; its
determination is left for future work.
The exponential scale $e^{-\pi^2 C/\zeta}$ is connected to the optimal
truncation of the asymptotic near-boundary series (Sec.~\ref{sec:exact-Taylor}):
at order $m_{\rm opt}\sim\pi^2/\xi$ the truncation error is minimized, and the
remaining irreducible error is precisely of the modular/nonperturbative order
$q^\ast=e^{-\pi^2/\xi}$.

\subsection{Long distances: algebraic screening}
\label{sec:screening}

From the deep-interior asymptotic~\eqref{eq:h-IR-from-coth}, the AdS$_2$ warp factor behaves for
$\zeta\gg C$ as
\begin{equation}
  a(\zeta)\;:=\;\frac{L_2^2}{\zeta^2}\,h(\zeta)\;\sim\;\frac{A}{\zeta^{3/2}},\qquad A>0\,.
\end{equation}
This is the only input needed to determine the long-distance Wilson-loop tail.
A detailed turning-point analysis (Sec.~\ref{sec:turning-point}) shows that
$L\propto\zeta_0^{1/4}$ and $E\propto -\zeta_0^{-1/2}$, giving the
algebraically screened regime
\begin{equation}
  E(L)\;\sim\; -\frac{\kappa_{\rm IR}}{L^2},\qquad F(L)= -\frac{dE}{dL}\sim -\frac{2\kappa_{\rm IR}}{L^3},\qquad L\gg C.
  \label{eq:E_IR}
\end{equation}
In particular, the potential vanishes as $L\to\infty$ and approaches zero from below. Thus the deep IR does
not enforce a transition to disconnected worldsheets; any Gross--Ooguri-type transition~\cite{Gross:1998} (if present) would
have to arise from intermediate scales.  The numerical check of
Sec.~\ref{subsec:numerical-branch-selection} finds no such pre-emption in the
minimal smooth matched ansatz.
At any nonzero temperature one expects the familiar screening
transition to reappear.

The physical interpretation of \eqref{eq:E_IR} is therefore sharper than the
exponent alone.  The extremal RN brane is governed in the IR by the
AdS$_2\times\mathbb{R}^3$ semi-local quantum liquid: the time direction is
critical, while the spatial directions are spectators of the $z\to\infty$
scaling~\cite{Faulkner:2009wj,Iqbal:2011ae}.  A thermal plasma with a finite
screening mass would give a Debye tail, schematically $E_{\rm conn}(L)\sim
\exp[-m_D(T)L]$, or a connected branch pre-empted by a disconnected saddle.  By
contrast, the infinitely long extremal throat supplies no thermal cap, and the
connected string samples a scale-free IR geometry; the result is the power-law
screening $E(L)\propto -L^{-2}$.  In this sense the algebraic tail is a sharp
Wilson-loop diagnostic of the semi-local critical IR, while the expected
exponential screening at any $T>0$ is the corresponding diagnostic that the
critical throat has been cut off.

\medskip\noindent\textit{Absence of the mass gap.}
The $L^{-2}$ screening forces a reinterpretation of the ``mass gap'' reported in the
truncated analysis of~\cite{Liu:2024}. In the truncated perturbative treatment, the
confinement indicator $\mathcal{G}_0(\zeta):=h(\zeta)/\zeta^2$ (following~\cite{Liu:2024}) develops a minimum at finite
radial depth (Fig.~\ref{fig:G0_compare}), and the curvature of the effective potential at
this minimum defines a characteristic mass scale $m_{\rm gap}$ for radial string
fluctuations. In the exact throat geometry, this minimum is absent: $\mathcal{G}_0$ decays
monotonically as $\zeta^{-3/2}$ and the effective string tension
$\mathcal{T}(\zeta)\propto \zeta^{-3/4}$ decreases monotonically toward zero as the string descends into the AdS$_2$
throat (see Sec.~\ref{sec:effective-tension} below). There is consequently no stable minimum around which
bound-state fluctuations could be quantized, and no evidence for a discrete gapped tower of radial string excitations in the exact throat analysis. The scale previously identified as $m_{\rm gap}$ retains a weaker meaning: it corresponds to
the curvature scale of the effective potential at the crossover region where the approximately linear plateau gives way
to the screened regime. The monotone decay of the effective string tension reflects the gapless nature of the extremal
horizon~\cite{Maldacena:2016upp}, strongly modified by the quantum corrections encoded
in the scaling~\eqref{eq:h-IR-from-coth} deep in the throat.

\medskip\noindent\textit{Analytic proof of monotone decay.}

\begin{theorem}[Complete monotonicity of the confinement indicator]
\label{thm:complete-monotone}
The confinement indicator $\mathcal{G}_0(\xi):=h(\xi)/\xi^2$ is completely monotone
on $(0,\infty)$ in the sense of Bernstein~\cite{Widder:1941}: for every integer $n\ge 0$,
\begin{equation}
(-1)^n\mathcal{G}_0^{(n)}(\xi)
=2\int_0^\infty dy\,y^{3+2n}\,e^{-\xi y^2}\,\coth(\pi y)>0\,,
\qquad \xi>0.
\label{eq:G0-complete-monotone}
\end{equation}
In particular, $\mathcal{G}_0$ is strictly positive and strictly decreasing, and
admits no finite-$\xi$ minimum.
\end{theorem}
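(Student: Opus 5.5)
The plan is to work directly from the exact integral representation \eqref{eq:h-exact-normalized}. Dividing by $\xi^2$ gives
\begin{equation*}
\mathcal{G}_0(\xi)=2\int_0^\infty dy\,y^3\,e^{-\xi y^2}\coth(\pi y),
\end{equation*}
which exhibits $\mathcal{G}_0$ as a Laplace transform in the variable $s=y^2$. Substituting $s=y^2$ ($dy=ds/(2\sqrt s)$) gives
\begin{equation*}
\mathcal{G}_0(\xi)=\int_0^\infty ds\,\rho(s)\,e^{-\xi s},\qquad \rho(s):=s\coth(\pi\sqrt s).
\end{equation*}
The weight $\rho$ is strictly positive on $(0,\infty)$, since $\coth>0$ there. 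Moreover $\rho(s)\to 1/\pi$ as $s\to0^+$ and $\rho(s)\sim s$ as $s\to\infty$. By Bernstein's theorem, the Laplace transform of a positive measure is completely monotone wherever it converges, and here it converges for all $\xi>0$. This gives the qualitative statement, but I would still verify the explicit formula \eqref{eq:G0-complete-monotone} by hand.

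The key step is justifying differentiation under the integral sign $n$ times in $\xi$. Each $\xi$-derivative brings down a factor $-y^2$, so formally
\begin{equation*}
\mathcal{G}_0^{(n)}(\xi)=2(-1)^n\int_0^\infty y^{3+2n}e^{-\xi y^2}\coth(\pi y)\,dy .
\end{equation*}
To legitimize this, fix a compact interval $\xi\in[\xi_1,\xi_2]\subset(0,\infty)$. There $|\partial_\xi^{k}(\text{integrand})|\le y^{3+2k}\coth(\pi y)\,e^{-\xi_1 y^2}$. Near $y=0$ we have $y^{3+2k}\coth(\pi y)\sim y^{2+2k}/\pi$, which is bounded, and the Gaussian factor controls the tail. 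This dominating function is therefore integrable and independent of $\xi$, and dominated convergence, applied inductively in $k$, gives the formula for every $n\ge0$ on each compact subinterval, hence on all of $(0,\infty)$.

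Strict positivity follows because the integrand in \eqref{eq:G0-complete-monotone} is continuous, nonnegative, and strictly positive for $y>0$, so each integral is strictly positive. Taking $n=0$ gives $\mathcal{G}_0>0$, and $n=1$ gives $\mathcal{G}_0'<0$, so $\mathcal{G}_0$ is strictly decreasing. A strictly decreasing function on an open interval cannot attain an interior minimum, since any candidate minimum point has larger values to its right's left and smaller values to its right. The only genuinely delicate point is the behavior at $y=0$, where $\coth(\pi y)$ has a simple pole. This is harmless because it is multiplied by $y^3$, consistent with the paper's earlier remark that the $\partial_x^3$ moment removes the $1/t$ singularity.
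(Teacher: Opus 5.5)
Your proposal is correct and follows essentially the same route as the paper: you divide the exact Gaussian-coth representation of $h$ by $\xi^2$, differentiate under the integral sign (your dominated-convergence bound on compact $\xi$-intervals spells out the paper's appeal to Gaussian decay), and read strict positivity of each $(-1)^n\mathcal{G}_0^{(n)}$ off the positive integrand. One harmless slip: as $s\to0^+$ you have $\rho(s)=s\coth(\pi\sqrt{s})\sim\sqrt{s}/\pi\to0$, not $1/\pi$; this does not affect positivity of $\rho$ or convergence of the Laplace transform.
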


\begin{proof}
From the exact integral representation~\eqref{eq:h-exact-normalized},
\begin{equation}
\mathcal{G}_0(\xi)=\frac{h(\xi)}{\xi^2}
=2\int_0^\infty dy\,y^3\,e^{-\xi y^2}\,\coth(\pi y)\,.
\end{equation}
The integrand $y^3\,e^{-\xi y^2}\coth(\pi y)$ is strictly positive for all $y>0$
and $\xi>0$, and the integral converges absolutely, so $\mathcal{G}_0(\xi)>0$.
Differentiation under the integral sign (justified by the Gaussian decay) gives
\begin{equation}
\mathcal{G}_0'(\xi)
=-2\int_0^\infty dy\,y^5\,e^{-\xi y^2}\,\coth(\pi y)<0
\qquad\text{for all }\xi>0\,,
\label{eq:G0-monotone}
\end{equation}
and more generally, the $n$-th derivative brings down $(-y^2)^n$, yielding
\eqref{eq:G0-complete-monotone} with a manifestly positive integrand.
\end{proof}

\begin{remark}
In the notation of Kinar, Schreiber, and Sonnenschein~\cite{Kinar:1999}, the
temporal effective string tension satisfies
$f_{\rm string}^2=g_{tt}g_{xx}\propto h(\zeta)/\zeta^2=\mathcal{G}_0(\zeta)$.
Using $s=U-U_T\sim\zeta^{-1}$ one has
$f_{\rm string}(s)\sim s^{3/4}$, so $f_{\rm string}(0)=0$.
Thus the exact throat belongs to the Kinar--Schreiber--Sonnenschein nonconfining power-law branch, not the
$f(0)>0$ confining branch.  Complete monotonicity of $\mathcal{G}_0$ means that
no finite-depth positive minimum of $f_{\rm string}$ can arise, ruling out
confinement at the level of the exact throat geometry.
\end{remark}

\medskip\noindent\textit{Prescription robustness.}
Theorem~\ref{thm:complete-monotone} is a statement about the mean
($\Delta=1$, annealed) geometry. Proposition~\ref{prop:moment-robust} of
Sec.~\ref{sec:variance} extends the decisive part --- strict monotone decay of
the confinement indicator, hence the absence of any finite-depth minimum --- to
the full family of moment-based effective geometries $h_\Delta^{1/\Delta}$, so
the identification of the confining minimum as a truncation artifact does not
rest on the averaged-metric choice. The quantitative reach of the mean geometry
itself is delimited by the exact variance \eqref{eq:var-exact}: relative
kernel fluctuations reach unity at $\xi\simeq1.22$ and grow as
$(\sqrt{\pi}/3)\,\xi^{3/2}$ beyond, so the $-\kappa_{\rm IR}/L^{2}$ law is the
exact screening law of the annealed geometry, with the prescription dependence
of the exponent quantified in Sec.~\ref{sec:variance}.

\subsection{Detailed derivation of the IR scaling}
\label{sec:turning-point}

The deep-infrared tail $E(L)\sim -\kappa_{\rm IR}/L^2$ follows from a systematic
turning-point expansion of the Wilson-loop parametrics~\cite{Kinar:1999,Sonnenschein:2000qm}. We present this derivation
in two complementary forms: (i) a pure $\zeta$-coordinate analysis in the AdS$_2$ throat,
and (ii) an effective-tension formulation that isolates the geometric origin of the exponents.

\medskip\noindent\textit{Pure $\zeta$-coordinate derivation.}

In the near-horizon region, the geometry is AdS$_2\times T^3$ with quantum-corrected
AdS$_2$ factor (cf.\ Sec.~\ref{sec:geometry}):
\begin{equation}
ds^2_{\rm throat}\approx
h(\zeta)\,\frac{L_2^2}{\zeta^2}(dt^2+d\zeta^2)
+\ell_x^2\,d\vec x^{\,2},
\qquad
\ell_x^2:=\frac{u_T^2}{L_{\rm AdS}^2}.
\label{eq:throat-metric}
\end{equation}
We compute the rectangular Wilson loop using worldsheet coordinates
$\tau=t$, $\sigma=x$ and embedding $\zeta=\zeta(x)$.

The induced metric gives the Euclidean Nambu--Goto density
$\mathcal{L}=\sqrt{g_{tt}}\sqrt{g_{xx}+g_{\zeta\zeta}\zeta'^2}$~\cite{Kinar:1999}.
Because $\mathcal{L}$ has no explicit $x$-dependence, the conserved first integral gives
\begin{equation}
\frac{dx}{d\zeta}
=\frac{L_2}{\ell_x}\frac{\sqrt{h(\zeta)}}{\zeta}\;
\frac{1}{\sqrt{\dfrac{h(\zeta)\zeta_0^2}{h(\zeta_0)\zeta^2}-1}}.
\label{eq:dx-dzeta-explicit}
\end{equation}

In the deep interior, $h(\zeta)\simeq a\,\zeta^{1/2}$ with $a>0$, so
$\sqrt{h(\zeta)}\simeq \sqrt{a}\,\zeta^{1/4}$ and
\begin{equation}
\frac{h(\zeta)\zeta_0^2}{h(\zeta_0)\zeta^2}
\simeq
\frac{a\zeta^{1/2}\,\zeta_0^2}{a\zeta_0^{1/2}\,\zeta^2}
=
\left(\frac{\zeta_0}{\zeta}\right)^{3/2}.
\end{equation}
Plugging into \eqref{eq:dx-dzeta-explicit} and rescaling $\zeta=\zeta_0 u$,
\begin{align}
\frac{L}{2}
&\simeq
\frac{L_2\sqrt{a}}{\ell_x}
\int_0^{\zeta_0}d\zeta\;
\frac{\zeta^{-3/4}}{\sqrt{\left(\frac{\zeta_0}{\zeta}\right)^{3/2}-1}}
\notag\\
&=
\frac{L_2\sqrt{a}}{\ell_x}\;
\zeta_0^{1/4}
\int_0^{1}du\;
\frac{1}{\sqrt{1-u^{3/2}}}.
\end{align}
The remaining integral is a Beta function:
\begin{equation}
\int_0^{1}\frac{du}{\sqrt{1-u^{3/2}}}
=\frac{2}{3}\,B\!\left(\frac{2}{3},\frac{1}{2}\right)
=4\sqrt{\pi}\,\frac{\Gamma(2/3)}{\Gamma(1/6)}.
\end{equation}
Hence the scaling is explicit:
\begin{equation}
L \propto \zeta_0^{1/4}.
\label{eq:L-zeta-scaling}
\end{equation}
The force on the quark-antiquark pair is $F(L)=-dE/dL$, where
\begin{equation}
-F\!\big(L\big)
=\frac{1}{2\pi}\,\sqrt{g_{tt}g_{xx}}\Big|_{\zeta_0}
=\frac{1}{2\pi}\,\frac{L_2}{\zeta_0}\,\sqrt{\ell_x^2\,h(\zeta_0)}.
\end{equation}
With $h(\zeta_0)\sim a\zeta_0^{1/2}$ this gives
$|F|\propto \zeta_0^{-3/4}$.
Using $L\propto \zeta_0^{1/4}\Rightarrow \zeta_0\propto L^4$, we obtain
\begin{equation}
F(L)\propto -L^{-3}
\qquad\Rightarrow\qquad
E(L)\propto -L^{-2},
\end{equation}
consistent with the algebraic screening regime.

\medskip\noindent\textit{Effective-tension viewpoint.}
\label{sec:effective-tension}

The scaling can be understood directly by introducing the
\emph{local effective string tension}~\cite{Kinar:1999}
\begin{equation}
\mathcal{T}(\zeta)\ :=\ \frac{1}{2\pi\alpha'}\sqrt{g_{tt}(\zeta)\,g_{xx}}
\ \propto\ \sqrt{g_{tt}(\zeta)\,g_{xx}}.
\label{eq:T-eff-def}
\end{equation}
Since $g_{xx}=\ell_x^2$ is constant in the throat and $g_{tt}=L_2^2\,h(\zeta)/\zeta^2$, we have
\begin{equation}
\mathcal{T}(\zeta)\ \propto\ \frac{\sqrt{h(\zeta)}}{\zeta}.
\label{eq:T-eff-h}
\end{equation}
The conserved quantity from $x$-translations can be recast as
\begin{equation}
\frac{dx}{d\zeta}
=
\sqrt{\frac{g_{\zeta\zeta}}{g_{xx}}}\;
\frac{1}{\sqrt{\left(\dfrac{\mathcal{T}(\zeta)}{\mathcal{T}(\zeta_0)}\right)^2-1}},
\qquad
\frac{L}{2}=\int_0^{\zeta_0}d\zeta\;\frac{dx}{d\zeta}.
\label{eq:L-from-T-eff}
\end{equation}
This is the ``effective potential'' form: $\zeta'(x)^2$ is controlled by the ratio
$\mathcal{T}(\zeta)/\mathcal{T}(\zeta_0)$, and the turning point occurs where the bracket
vanishes.

Because of the AdS$_2$ throat (double zero of $f$), we have the universal relation
$g_{tt}=g_{\zeta\zeta}\propto h(\zeta)/\zeta^2$.
For the deep-IR asymptotic $h(\zeta)\sim \zeta^{a}$, this gives
\begin{equation}
g_{tt}(\zeta)\ \propto\ \frac{h(\zeta)}{\zeta^2}\ \sim\ \zeta^{a-2}
\equiv \zeta^{-p},
\qquad p:=2-a,
\label{eq:gtt-power}
\end{equation}
and from \eqref{eq:T-eff-h},
$\mathcal{T}(\zeta)\propto \zeta^{-p/2}$.
Evaluating \eqref{eq:L-from-T-eff} with $g_{\zeta\zeta}=g_{tt}\sim \zeta^{-p}$ and $g_{xx}=\ell_x^2$,
and rescaling $\zeta=\zeta_0 u$, one finds
\begin{equation}
L \propto \zeta_0^{\,1-p/2}=\zeta_0^{\,a/2}.
\label{eq:L-exponent-general}
\end{equation}
Thus \emph{the exponent in $L(\zeta_0)$ is fixed entirely by the single IR power of}
$h(\zeta)$ \emph{together with AdS$_2$ scaling} (which enforces the $\zeta^{-2}$ factor in 
$g_{tt}$).

In the present problem, $a=\tfrac12$ and $p=\tfrac32$, giving immediately
\begin{equation}
L\ \propto\ \zeta_0^{1/4},
\qquad
\mathcal{T}(\zeta)\ \propto\ \zeta^{-3/4}.
\label{eq:L-T-Mordell}
\end{equation}
The effective tension decreases monotonically as the string descends deeper into the throat,
with no minimum---this is the geometric origin of the absence of a confining regime.
A common AdS$_2$ intuition~\cite{Maldacena:1998im} is that for a metric of the form
$ds^2\sim (dx^2+d\zeta^2)/\zeta^2$ one finds a turning-point relation $L\propto \zeta_0$
(the geodesic ``semicircle'' picture), which would suggest that an estimate of the form
$E\sim L\,\mathcal{G}_0(\zeta_0)$ with $\mathcal{G}_0=h(\zeta)/\zeta^2$ leads to a fractional
power law $E(L)\sim L^{-1/2}$ when $h(\zeta)\sim(\zeta/C)^{1/2}$.
However, this reasoning does \emph{not} apply in the present background: in the AdS$_2\times T^3$
throat \eqref{eq:throat-metric} the separation directions lie in the transverse $T^3$ factor, so
$g_{xx}$ is $\zeta$-independent, and in the full string-frame metric \eqref{eq:quantum-ads5} the
quantum factor $h$ multiplies $g_{tt}$ and $g_{UU}$ but \emph{not} $g_{xx}$.

\subsection{Numerical check of the connected branch in a smooth matched ansatz}
\label{subsec:numerical-branch-selection}

The preceding derivation establishes the local deep-throat saddle.  A separate
question, familiar from holographic Wilson-loop analyses, is whether this saddle
lies on the physical branch of the global parametric curve.  In particular,
multi-branch structure can occur when the string probes different radial regimes,
and the physical branch must satisfy the usual monotonicity and concavity
conditions
\begin{equation}
        \frac{dL}{dU_0}<0,\qquad
        \frac{dE}{dL}>0,\qquad
        \frac{d^2E}{dL^2}\leq 0,
\label{eq:branch-concavity-conditions}
\end{equation}
where $U_0$ is the turning point in a radial coordinate increasing toward the
asymptotic AdS$_5$ boundary. (The condition $dE/dL>0$ is consistent with the
attractive force $F(L)=-dE/dL<0$; note that $E(L)<0$ and approaches zero from
below, so $E$ is an increasing function of $L$ in the screened regime.)  Equivalently, in the notation of the stability
analyses of classical string embeddings, a branch with the opposite sign of
$dL/dU_0$ is a natural candidate for a fluctuation instability
\cite{Arias:2009yz,Chatzis:2024xrn}.  Examples with several competing branches,
maximal separations, or first-order branch changes have been studied in related
D3-brane geometries \cite{Brandhuber:1998bs}; recent Wilson-loop computations
also emphasize the usefulness of global numerical minimization when conformal,
confining, and screened regimes coexist \cite{Giliberti:2024dpm}.

We therefore performed a direct numerical check (full implementation details, tolerances, and
cutoff-dependence tests are collected in Appendix~\ref{app:numerics}) using the exact integral
representation of $h$.  First, in the pure AdS$_2$ throat, the result can be packaged as a
universal normalized screening curve.  In units $C=L_2=\ell_x=1$, with the
boundary condition $h(0)=1$ fixed as in \eqref{eq:h-exact-normalized}, the
separation and renormalized energy are
\begin{align}
L_{\rm th}(\xi_0)
 &=2\int_0^1du\,
   \left(\frac{h_0\,h(\xi_0u)}{h(\xi_0u)-h_0u^2}\right)^{1/2},
\label{eq:numerical-L-throat}
\\[2mm]
E_{\rm th}(\xi_0)
 &= \frac{1}{\pi}\left[
 \frac{1}{\xi_0}\int_0^1du\,\frac{h(\xi_0u)}{u^2}
 \left(\frac{1}{\sqrt{1-h_0u^2/h(\xi_0u)}}-1\right)
 -\int_{\xi_0}^{\infty}d\xi\,\frac{h(\xi)}{\xi^2}
 \right],
\label{eq:numerical-E-throat}
\end{align}
where $h_0:=h(\xi_0)$.  The endpoint singularity in
\eqref{eq:numerical-E-throat} is integrable and was treated by a change of
variables in the numerical code.  Equations~\eqref{eq:numerical-L-throat}--\eqref{eq:numerical-E-throat}
should be read as more than a numerical check: they define a reusable universal
function, independent of the UV completion,
\begin{equation}
        \mathfrak{E}_{\rm th}(\mathfrak{L})
        :=E_{\rm th}(\xi_0(\mathfrak{L})),
        \qquad
        \mathfrak{L}:=L_{\rm th}(\xi_0),
        \qquad \xi_0\in(0,\infty).
\label{eq:universal-throat-function}
\end{equation}
This curve has an exact short-endpoint structure.  If the throat is classical,
$h\equiv1$, then
\begin{equation}
        L_{\rm th}=2\int_0^1\frac{du}{\sqrt{1-u^2}}=\pi,
        \qquad E_{\rm th}=0,
\label{eq:classical-throat-marginal}
\end{equation}
for every turning point: the connected and disconnected saddles are marginally
degenerate at the single separation $\mathfrak{L}=\pi$.  The full Schwarzian
curve is therefore quantum-generated.  As $\xi_0\to0$, one finds
\begin{equation}
        \mathfrak{E}_{\rm th}(\pi)
        =-\frac{1}{\pi}\int_0^\infty d\xi\,\frac{h(\xi)-1}{\xi^2}
        =-\frac{1}{6\pi},
\label{eq:throat-threshold-energy}
\end{equation}
where the last equality follows directly from
\begin{equation}
 \int_0^\infty d\xi\,\frac{h(\xi)-1}{\xi^2}
 =4\int_0^\infty \frac{y\,dy}{e^{2\pi y}-1}
 =\frac{4\zeta_{\rm R}(2)}{(2\pi)^2}
 =\frac{1}{6}.
\label{eq:threshold-one-sixth}
\end{equation}
Numerically $\mathfrak{L}-\pi=O(\xi_0^2)$ while
$\mathfrak{E}_{\rm th}+1/(6\pi)=O(\xi_0)$, so the pure-throat curve leaves the
threshold with a square-root branch point.  In the isolated throat
$\mathfrak{L}=\pi$ is a minimal separation; in a matched AdS$_5$/AdS$_2$
geometry, smaller separations are supplied by the far-region branch.

Restoring units for any geometry whose connected string enters the same
Schwarzian throat gives
\begin{equation}
        L=\frac{L_2}{\ell_x}\,\mathfrak{L},
        \qquad
        E=\frac{L_2^2}{C}\,\mathfrak{E}_{\rm th}(\mathfrak{L}),
\label{eq:universal-throat-scales}
\end{equation}
with all model dependence confined to the matching onto this curve and to the
identification of $C$, $L_2$, and $\ell_x$.  At the threshold,
$E_{\rm th}=-L_2^2/(6\pi C)$; this binding energy vanishes in the classical
limit $C\to\infty$.  The large-$\mathfrak{L}$ endpoint of the universal function
is the semi-local screening law
$\mathfrak{E}_{\rm th}(\mathfrak{L})\sim -\kappa_{\rm th}/\mathfrak{L}^2$.
These formulae give
\begin{equation}
        (-E_{\rm th})L_{\rm th}^{2}=0.26003\quad(\xi_0>10^3),
        \qquad
        \frac{d\log(-E_{\rm th})}{d\log L_{\rm th}}=-2.001,
\label{eq:throat-numerical-fit}
\end{equation}
which agrees with the analytic coefficient
\begin{equation}
        \kappa_{\rm th}= \frac{32}{\sqrt{\pi}}
        \left(\frac{\Gamma(2/3)}{\Gamma(1/6)}\right)^3
        =0.259920745\ldots .
\label{eq:kappa-throat-numerical}
\end{equation}
Figure~\ref{fig:universal-throat} shows the full universal curve and both exact
endpoints.  In particular, $(-\mathfrak{E}_{\rm th})\mathfrak{L}^2$
runs from the threshold value $\pi/6$ to $\kappa_{\rm th}$ as the turning
point moves from the boundary to the deep throat.

\begin{figure}[t]
  \centering
  \includegraphics[width=0.65\linewidth]{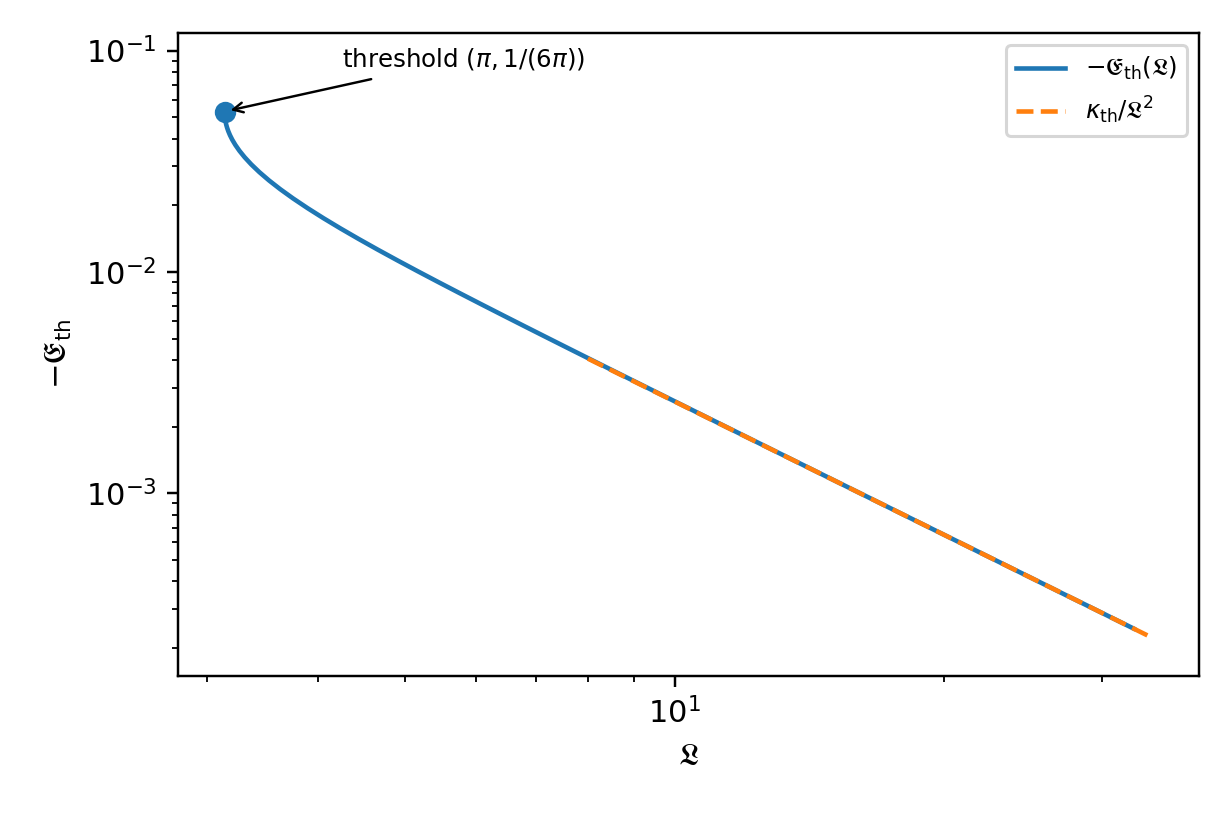}
  \caption{Universal throat screening curve in the normalized units
  $C=L_2=\ell_x=1$.  The plotted quantity is
  $-\mathfrak{E}_{\rm th}(\mathfrak{L})$.  The left endpoint is the exact
  quantum threshold $(\mathfrak{L},-\mathfrak{E}_{\rm th})=(\pi,1/(6\pi))$;
  the dashed line is the large-separation tail
  $\kappa_{\rm th}/\mathfrak{L}^2$.}
  \label{fig:universal-throat}
\end{figure}

Equivalently, the coefficient in physical units is
\begin{equation}
        \kappa_{\rm IR}=
        \frac{32}{\sqrt{\pi}}
        \left(\frac{\Gamma(2/3)}{\Gamma(1/6)}\right)^3
        \frac{L_2^4}{\ell_x^2 C}\,.
\label{eq:kappa-IR-restored}
\end{equation}
Here $L_2$ is the AdS$_2$ radius (length), $\ell_x=u_T/L_{\rm AdS}$ is the
dimensionless horizon scale entering the transverse metric, and $C$ is the JT
heat capacity (length), so $\kappa_{\rm IR}$ has dimensions of $[{\rm
length}]$, consistent with $E(L)\sim -\kappa_{\rm IR}/L^2$ for a renormalized
energy $E$ of dimension $[{\rm length}]^{-1}$.  (The normalized curve of
Sec.~\ref{sec:renorm-energy} is obtained in units $C=L_2=\ell_x=1$, in which
$E$ and $L$ are measured in the same length unit.)

Second, to check that the connected saddle remains dominant across the AdS$_5$/AdS$_2$ interpolation, we
used the minimal smooth completion
\begin{equation}
        f(U)=1-\frac{3}{U^4}+\frac{2}{U^6},
        \qquad
        \xi(U)=\frac{1}{12C(U-U_T)},
        \qquad
        h(U)=h_{\rm exact}(\xi(U)),
\label{eq:smooth-matched-ansatz}
\end{equation}
with $U_T=R=C=1$.  This ansatz is not a substitute for a first-principles
solution of the quantum-corrected interpolation; it is the simplest smooth
geometry that has the exact Schwarzian throat in the IR and tends to $h=1$ in
the UV.  The parametric integrals of Sec.~\ref{sec:renorm-energy} were then
evaluated over
\begin{equation}
        U_0-U_T\in[10^{-7},10^2]\,.
\end{equation}
Over this range no cusp or maximal separation was found, and the sign conditions~\eqref{eq:numerical-sign-checks} hold (Fig.~\ref{fig:numerical-branch-selection}):
\begin{equation}
        \frac{dL}{dU_0}<0,
        \qquad
        \frac{dE}{dL}>0,
        \qquad
        \frac{d^2E}{dL^2}<0
\label{eq:numerical-sign-checks}
\end{equation}
at every sampled point.  Since the near-horizon radius of the extremal
RN-AdS$_5$ throat in these units is $L_2^2=1/12$, Eq.~\eqref{eq:kappa-IR-restored}
predicts
\begin{equation}
        \kappa_{\rm RN}=\frac{\kappa_{\rm th}}{144}
        =1.805005\times 10^{-3}\,.
\label{eq:kappa-rn-prediction}
\end{equation}
The numerical branch gives
\begin{equation}
        \langle (-E)L^2\rangle_{\xi_0>10^4}=1.80408\times 10^{-3},
        \qquad
        \frac{d\log(-E)}{d\log L}=-2.003,
\label{eq:full-branch-numerical-fit}
\end{equation}
confirming that, in this matched ansatz, the connected branch remains monotone,
concave, and reaches the algebraically screened IR regime.  With the subtraction
convention in which the disconnected pair has zero energy, the numerical energy
is negative throughout the sampled branch, so no disconnected saddle pre-empts it
within this ansatz.

\begin{figure}[t]
    \centering
    \includegraphics[width=0.95\textwidth]{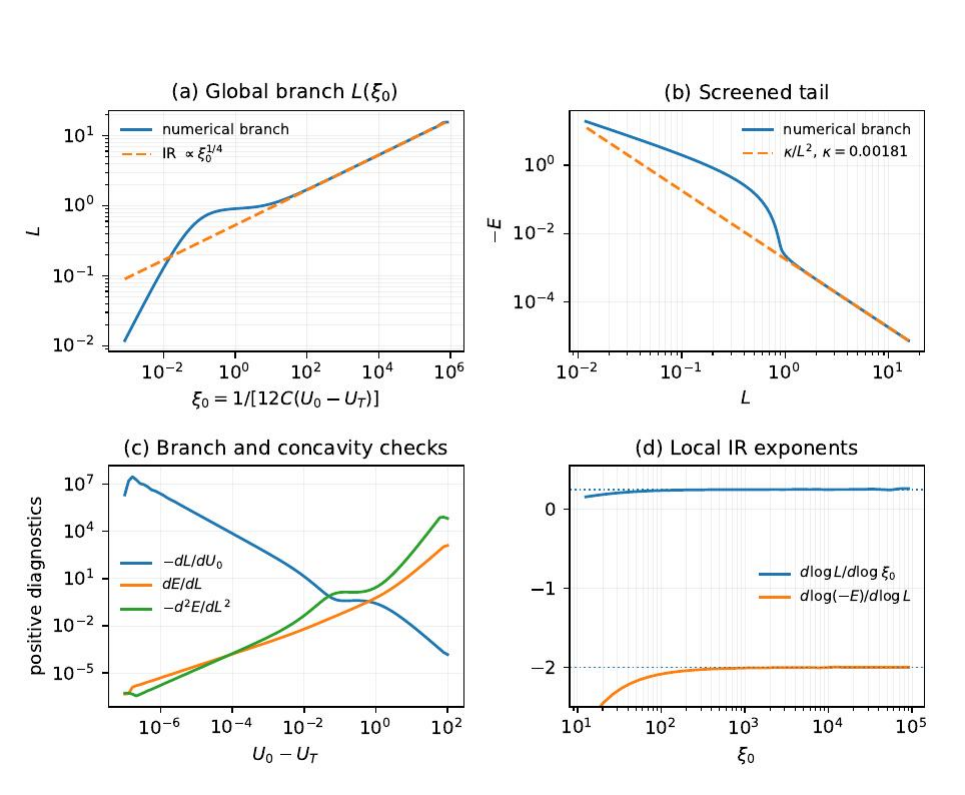}
    \caption{Numerical check of the connected branch using the exact metric factor
    $h(\xi)$.  Panel~(a) shows the global branch $L(\xi_0)$ in the smooth
    matched ansatz \eqref{eq:smooth-matched-ansatz}; at large $\xi_0$ it
    approaches the analytic $\xi_0^{1/4}$ throat scaling.  Panel~(b) shows that
    the same branch approaches the screened tail $-E=\kappa/L^2$ with
    $\kappa=\kappa_{\rm th}/144$.  Panel~(c) plots the positive quantities
    $-dL/dU_0$, $dE/dL$, and $-d^2E/dL^2$; thus the branch satisfies the
    stability/concavity diagnostics throughout the sampled range.  Panel~(d)
    shows the local exponents approaching $1/4$ and $-2$, respectively.}
    \label{fig:numerical-branch-selection}
\end{figure}

The conclusion of this numerical exercise (which used a logarithmic grid of 100
turning points over $U_0-U_T\in[10^{-7},10^2]$ with UV cutoff $U_{\max}=10^4$)
is deliberately modest but important:
within the natural smooth completion \eqref{eq:smooth-matched-ansatz}, the
screened saddle is not merely an asymptotic solution of the isolated throat.  It
lies on the same monotone, concave connected branch that emerges from the UV.
A different conclusion would require a genuine feature of the unknown
first-principles quantum-corrected matching region, not the exact throat
asymptotics alone.

The same matched-ansatz computation gives a direct observable diagnostic of the pseudo-linear
question.  Define
\begin{equation}
        n_F(L):=-\frac{d\log |F|}{d\log L},
        \qquad |F|=\frac{dE}{dL}.
\label{eq:nF-def}
\end{equation}
The Coulomb, linear, and screened regimes correspond respectively to
$n_F=2$, $n_F=0$, and $n_F=3$.  Figure~\ref{fig:money-plot} compares the exact
branch with the same matched ansatz evaluated using the
$\mathcal{O}(\xi^4)$ polynomial $h^{(4)}=1+\xi^2/60-\xi^3/126+\xi^4/240$.
For the truncated polynomial the effective wall sits at
\begin{equation}
        \xi_{\rm wall}=4.19169,
        \qquad |F|_{\rm wall}=1.54314\times10^{-2},
\label{eq:truncated-wall-force}
\end{equation}
and the force exponent tends to $n_F\to0$, producing a genuine linear asymptote
inside the truncated geometry.  In the exact geometry, by contrast, the force
passes from the Coulomb law through a crossover and then approaches
$|F|=2\kappa_{\rm RN}/L^3$; it does not remain near $n_F=0$ over an extended
interval.  The approximately linear behavior obtained from $h^{(4)}$ is
therefore a finite-order truncation shoulder, not a stable intermediate phase of
the exact matched branch.

\begin{figure}[t]
    \centering
    \includegraphics[width=0.98\textwidth]{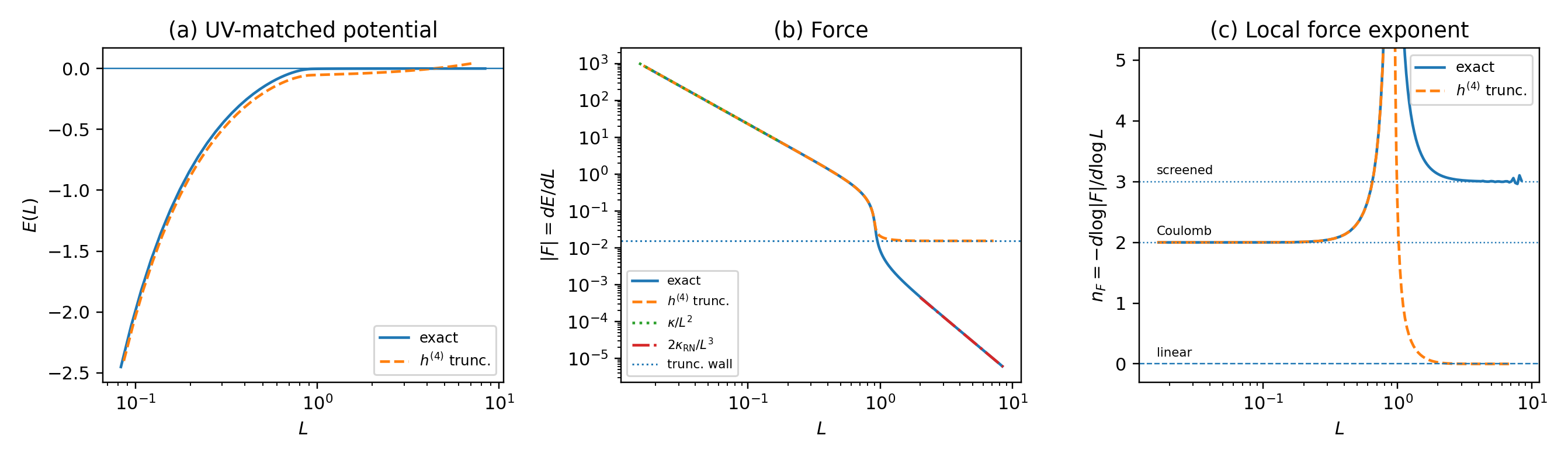}
    \caption{Observable-level comparison between the exact Schwarzian metric
    factor and the $\mathcal{O}(\xi^4)$ truncation in the smooth matched ansatz
    \eqref{eq:smooth-matched-ansatz}.  Panel~(a) shows the UV-matched static
    potential.  The truncated energy is shifted by an additive constant so that
    it agrees with the exact curve at the shortest plotted separation; the force
    and local exponent are independent of this convention.  Panel~(b) shows the
    force magnitude.  The exact branch crosses over from the Coulomb guide
    $\kappa/L^2$ to the screened guide $2\kappa_{\rm RN}/L^3$, while the
    truncated branch approaches the constant wall force in
    \eqref{eq:truncated-wall-force}.  Panel~(c) shows the local force exponent
    $n_F$.  The exact curve approaches the screened value $3$ rather than the
    linear value $0$; the constant-force plateau is a property of the finite
    polynomial extrapolation.}
    \label{fig:money-plot}
\end{figure}

\subsection{Crossover scales and modular structure}
\label{sec:crossover}

The Coulomb coefficient $\kappa$ is fixed by the AdS$_5$ asymptotics~\cite{Maldacena:1998im}, while the linear
coefficient $\sigma_{\rm rect}$ is induced by the Schwarzian/JT scale $C$~\cite{Liu:2024}.  In the smooth matched ansatz, Eq.~\eqref{eq:sigma-rect-match} gives a numerical UV calibration.  Equating the Coulombic and linear terms in \eqref{eq:E-smallL} gives
\begin{equation}
  L_c^{\rm match}
  =\sqrt{\frac{\kappa}{\sigma_{\rm rect}^{\rm match}}}\,C
  =82.4685\,C .
  \label{eq:Lc}
\end{equation}
Because this estimate uses only the first two UV terms, it should not be read as evidence for a real long plateau.  It is a matching coefficient; the observable force comparison in Fig.~\ref{fig:money-plot} tests directly whether a constant-force region persists in the exact branch.

The exact integral representation provides two natural asymptotic channels controlled by the modular pair
\begin{equation}
\tau=\frac{i\,\zeta}{\pi C}=\frac{i\,\xi}{\pi},
\qquad
q\equiv e^{\pi i\tau}=e^{-\zeta/C}=e^{-\xi},
\qquad
q^\ast\equiv e^{-\pi i/\tau}=\exp\!\Big(-\frac{\pi^2 C}{\zeta}\Big)=\exp\!\Big(-\frac{\pi^2}{\xi}\Big).
\label{eq:q-qstar}
\end{equation}
The channel weights are comparable when $|q|=|q^\ast|$, which fixes the
\emph{self-dual point} of the modular inversion $\xi\mapsto\pi^2/\xi$:
\begin{equation}
|q|=|q^\ast|
\qquad\Longleftrightarrow\qquad
\xi=\frac{\pi^2}{\xi}
\qquad\Longleftrightarrow\qquad
\xi_{\rm sd}=\pi
\quad(\text{equivalently } \zeta_{\rm sd}=\pi C).
\label{eq:zeta-selfdual}
\end{equation}
Since the turning point $\zeta_0$ increases monotonically with the boundary separation $L$,
this gives a geometric definition of a boundary crossover length,
\begin{equation}
L_{\rm sd}\;\equiv\;L(\zeta_0=\zeta_{\rm sd})=L(\zeta_0=\pi C).
\label{eq:Lsd}
\end{equation}
Unlike \eqref{eq:Lc}, which is a coefficient-matching estimate, $L_{\rm sd}$ is fixed by
the dual $q$- and $q^\ast$-expansion structure of the Mordell identity. The radial crossover
depth $\zeta_{\rm sd}=\pi C$ is intrinsic to the modular structure; the associated boundary
separation $L_{\rm sd}$, however, still depends on the $\zeta_0\mapsto L$ map and hence on
the matched geometry away from the throat.
On the connected-string branch at extremality, the modular crossover is smooth,
with no cusp or discontinuity in $E(L)$ or the force $F(L)=-dE/dL$.
In the approximately linear regime $F\approx-\sigma_{\rm eff}$, while
in the algebraic screening regime \eqref{eq:E_IR} one has $F(L)\propto -L^{-3}$;
the modular self-dual point $L_{\rm sd}$ provides a useful reference scale for
this crossover, although the actual approach to the clean $L^{-2}$ tail is broad
and is determined by the full $L(\zeta_0)$ map, as illustrated numerically in
Sec.~\ref{subsec:numerical-branch-selection}
for the minimal smooth matched ansatz.

To summarize, the single dimensionful scale $C$ from the Schwarzian sector enters the physics in three
distinct ways. First, the short-distance matching coefficient gives the UV
calibration $L_c^{\rm match}=82.4685\,C$ (Sec.~\ref{sec:short-distance}),
which should not be confused with a robust plateau scale.
Second, it controls the nonperturbative exponential corrections:
the $S$-transformed channel of the Mordell integral produces exponentially suppressed
corrections to $h(\zeta)$ of order $e^{-\pi^2 C/\zeta}$, which are invisible to any
finite Taylor truncation and represent a genuinely nonperturbative correction with respect to the $\xi\to 0$ near-boundary power expansion. Third, $C$ governs the crossover to infrared screening:
$L_{\rm sd}=L(\zeta_0=\pi C)$ provides a modular reference scale for the crossover toward the screened regime. All three
crossovers are tied to a single physical origin: the quantum fluctuations of the
Schwarzian boundary mode of JT gravity in the near-horizon AdS$_2$ region.

\subsection{Limits of finite-order truncation}
\label{subsec:finite-order-truncation}
\label{sec:truncation}

Standard perturbative analyses~\cite{Liu:2024,Liu:2024strange,Nian:2025fluid} rely on a Taylor expansion of the metric factor around the
boundary ($\zeta = 0$):
\begin{equation}
h_N(\zeta) = 1 + \sum_{k=2}^{N} a_k \left(\frac{\zeta}{C}\right)^k,
\label{eq:taylor}
\end{equation}
where the signed coefficients $a_k$ are read off from the expansion~\eqref{eq:h-exact-Taylor}
(e.g.\ $a_2=+1/60$, $a_3=-1/126$, $a_4=+1/240$).
While this series provides a valid effective description for $\zeta \ll C$, it encounters
structural limitations when extrapolated to the infrared ($\zeta \gg C$).

For any truncation at even $N>2$, the leading coefficient $a_N>0$ ensures that
$\mathcal{G}_{0,N}(\zeta)=h_N(\zeta)/\zeta^2$ diverges both as $\zeta\to 0$ and as
$\zeta\to\infty$ (since $h_N\sim \zeta^N$ at large $\zeta$). By continuity, $\mathcal{G}_{0,N}$
therefore possesses at least one global minimum at finite $\zeta$; for the $\mathcal{O}(\xi^4)$
truncation this is the stationary point at $\xi\simeq 4.19$ shown in Fig.~\ref{fig:G0_compare} (dashed curve).
However, this feature lies well outside the domain where the truncated series is accurate
(Fig.~\ref{fig:h_series}) and is thus an artifact of truncation.

By contrast, the exact solution (Eq.~\eqref{eq:h-IR-from-coth}) gives $\mathcal{G}_0(\zeta)\sim \zeta^{-3/2}$, which decays
monotonically without any extremum. Finite-order extrapolations therefore artificially
induce a spurious minimum where none exists in the exact throat geometry.

\begin{figure}[t]
  \centering
  \includegraphics[width=0.65\linewidth]{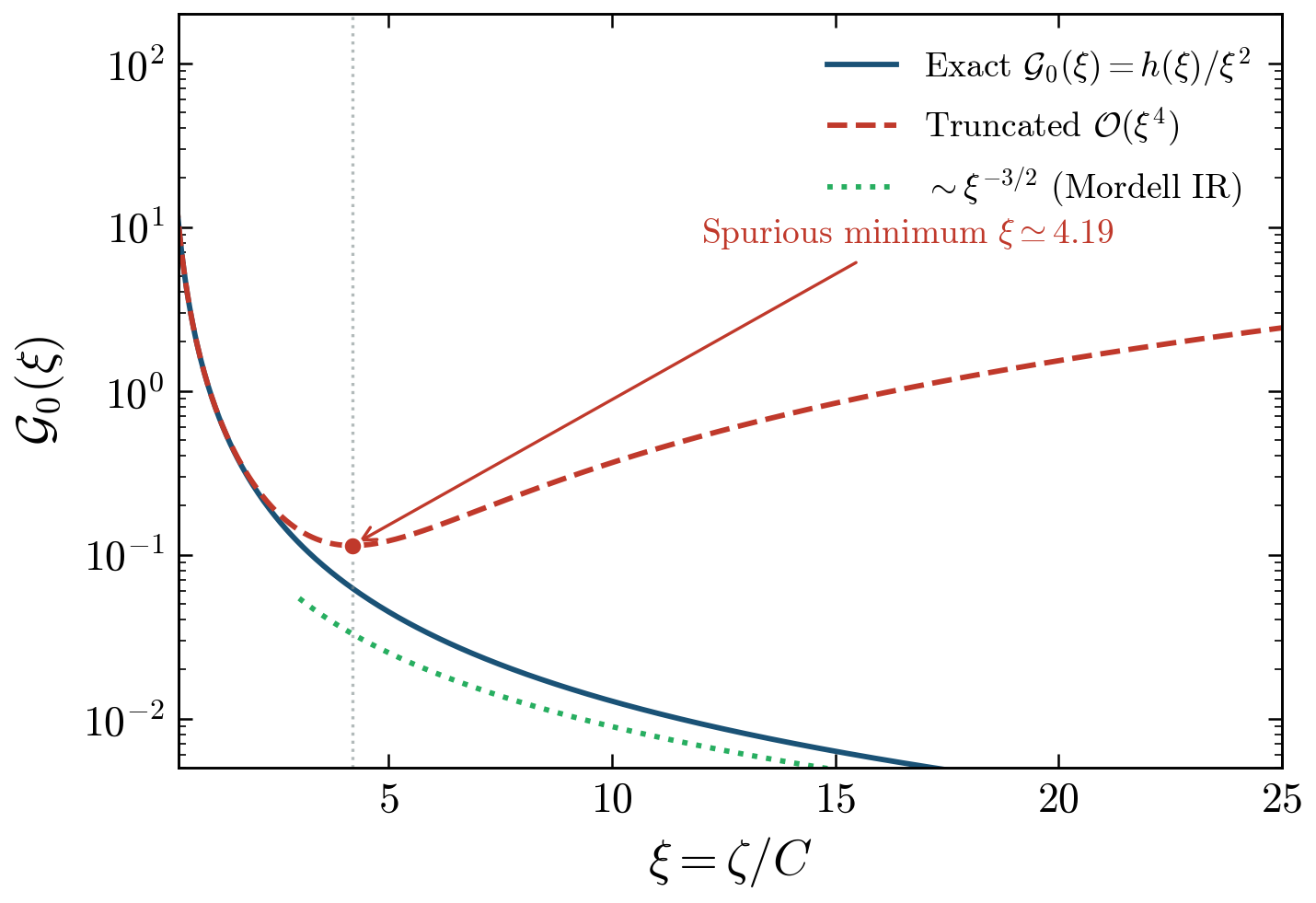}
  \caption{The ratio $\mathcal{G}_0(\xi)\equiv h(\xi)/\xi^2$ on a logarithmic scale, comparing the exact result (solid) to the $\mathcal{O}(\xi^4)$ truncation (dashed). The dotted line shows the large-$\xi$ asymptotic $\sim\xi^{-3/2}$. The truncated series develops a spurious minimum at $\xi\simeq 4.19$ (circle) followed by polynomial growth, while the exact $\mathcal{G}_0$ decays monotonically. This demonstrates that strict linear confinement is an artifact of truncation: in the exact throat geometry the rectangular Wilson-loop potential is algebraically screened, $E(L)\sim -\kappa_{\rm IR}/L^2$.}
  \label{fig:G0_compare}
\end{figure}

The exact integral representation
\begin{equation}
h(\zeta) = 2 \Bigl(\frac{\zeta}{C}\Bigr)^2 \int_0^\infty dy\, y^3 e^{-(\zeta/C)y^2} \coth(\pi y)
\end{equation}
is equivalent to a Mordell transform (Sec.~\ref{sec:mordell-eval}), evaluated explicitly in Eq.~\eqref{eq:h-q-series}, and therefore exposes two complementary expansion parameters:
$q^\ast=e^{-\pi^2 C/\zeta}$, small near the AdS$_2$ boundary ($\zeta\ll C$), and $q=e^{-\zeta/C}$, small deep in the throat ($\zeta\gg C$). In practice this
provides a controlled interpolation between the two regimes.

\begin{itemize}
\item \textit{UV expansion ($\zeta \ll C$):} The near-boundary behavior
(Sec.~\ref{sec:exact-Taylor}) contains the analytic
power series in $\zeta/C$ together with modular ($S$-transformed) terms of the form
$\sim \exp(-\pi^2 C/\zeta)$, which are invisible to any finite truncation in powers of $\zeta$.

\item \textit{IR expansion ($\zeta \gg C$):} In the deep interior the natural expansion variable
is $q=e^{-\zeta/C}$. The exact integral (Sec.~\ref{sec:IR-asymp}) gives
\begin{equation}
h(\zeta)\sim(\zeta/C)^{1/2}\left[1+\mathcal{O}\!\left(\frac{C}{\zeta}\right)\right]
+\mathcal{O}\!\left(e^{-\zeta/C}\right),
\end{equation}
so that the leading $\zeta^{1/2}$ scaling---and thus the monotone decay
$\mathcal{G}_0(\zeta)\sim \zeta^{-3/2}$---is obtained directly from the exact nonperturbative
completion rather than from any finite near-boundary truncation.
\end{itemize}

\FloatBarrier

\section{Outlook}
\label{sec:outlook}

The exact integral representation, the closed $q/q^\ast$-series evaluation,
the general-$\Delta$ extension, the variance calculation, the
complete-monotonicity/robustness results, and the observable-level force
comparison established above open several directions for further work.

For Wilson-loop applications, the main remaining problem is the global matching of $h(\zeta)$ from the AdS$_2$ throat into the full AdS$_5$ geometry. The large-$L$ screening established here depends only on the asymptotics of $h$ deep in the throat, $h\sim\sqrt{\zeta}$, and is therefore robust, but a complete description of the quark-antiquark potential at all separations requires an exact or controlled solution for the quantum-corrected warp factor across the entire radial direction. The matching ansatz of~\cite{Liu:2024} provides a starting point, but a first-principles derivation --- perhaps using the full spectral decomposition of the Schwarzian sector --- would place the intermediate-$L$ regime on the same footing as the infrared.

Within the throat itself, the fully averaged observable $\langle W[g]\rangle$ --- the Schwarzian average of the worldsheet functional, with the connected kernels of~\eqref{eq:logW-cumulants} evaluated at split arguments --- remains the natural step beyond the annealed prescription quantified in Sec.~\ref{sec:variance}.

Spatial circular Wilson loops are a natural further application.  They require a
separate disk boundary-value problem and a controlled perimeter subtraction for
the renormalized action, and are left for future work.

A closely related direction is finite temperature.  As discussed in Sec.~\ref{sec:screening}, a nonzero temperature caps the AdS$_2$ throat and should replace the extremal algebraic tail by ordinary thermal screening or disconnected-saddle physics at sufficiently large $L$.  The remaining technical problem is to carry out the analogous radial reduction of the finite-temperature kernel --- now the double-spectral representation of~\cite{Mertens:2017,Saad:2019lba}, of which the extremal kernel used here is the $\beta\to\infty$ limit --- and to track how the extremal cancellation $\Phi(0;\tau)=0$ (Eq.~\eqref{eq:Phi0-cancel}) is deformed at finite $\beta$.

On the string-theory side, the present analysis treats the string semiclassically: classical Nambu--Goto worldsheets in a quantum-corrected background. Both $\alpha'$ corrections to the worldsheet action and loop corrections from string fluctuations could modify the short-distance coefficients in Eq.~\eqref{eq:E-smallL}, and could renormalize the coefficient $\kappa_{\rm IR}$ of the infrared tail. However, the $\sqrt{\zeta}$ scaling of $h$ is a property of the Schwarzian two-point function~\cite{Mertens:2017} and is independent of the worldsheet approximation. The infrared \emph{exponent} in $E(L)\sim-\kappa_{\rm IR}/L^2$ --- and hence the qualitative screening structure --- should therefore be robust under these corrections, even if the coefficient $\kappa_{\rm IR}$ itself is not.

The mechanism identified here is not specific to the RN-AdS$_5$ Wilson-loop setup.
In backgrounds where the relevant near-horizon observable reduces to the same
Schwarzian two-point kernel~\cite{Ghosh:2019rcj}, the same Mordell-type metric factor appears;
whether the monotone decay of
the effective string tension persists depends on the behavior of the transverse
geometry in the throat.
More generally, many near-extremal observables are governed by closely related Schwarzian kernels, including transport properties~\cite{Liu:2024strange,Nian:2025fluid}, low-temperature shear correlators and $\eta/s$, with regime-dependent conclusions regarding the Kovtun--Son--Starinets bound~\cite{PandoZayas:2025etaS,Cremonini:2025etaS,Gouteraux:2025shear,Kanargias:2025shear}, Hawking radiation spectra from near-extremal rotating black holes~\cite{Maulik:2025hawking}, quasi-normal mode frequencies~\cite{Jiang:2025qnm}, and the replica phase structure of the entropy~\cite{Nian:2026replica}. The precise endpoint asymptotics and modular completions can depend on the operator dimension and observable, but the lesson is common: finite near-boundary or low-temperature truncations need not remain uniformly valid in the deep throat. Wherever a qualitative conclusion depends on extrapolating such a truncation beyond its controlled regime, an exact-kernel analysis analogous to the one performed here should be possible. When the same metric factor $h(\zeta)$ enters, the integral representation~\eqref{eq:h-exact-normalized} and the closed $q/q^\ast$-series~\eqref{eq:h-q-series} provide the required check directly.

The general-$\Delta$ subsection suggests a broader mathematical program.  We
have identified the differential ladders, the integer/half-integer modular
dichotomy, the spectral-edge law, and the dimension-dependent nonperturbative
scale $q_\Delta=e^{-\Delta^2\xi}$.  A full closed Appell--Lerch formula for
arbitrary real $\Delta$ remains open.  It would also be interesting to relate the
present Mordell/Appell--Lerch structure to the universal Schwarzian sectors in
large-$c$ two-dimensional CFTs identified in~\cite{Ghosh:2019rcj,Mertens:2017};
establishing a direct connection for the RN-AdS$_5$ Wilson-loop observable
studied here would require a separate derivation.

Finally, the explicit Mordell evaluation suggests a sharper analytic program.
Section~\ref{sec:exact-Taylor} identifies the Borel singularities of the
UV expansion at $t=-\pi^2 k^2$ and relates the first singulant ($k=1$) to the Mordell
scale $q^\ast=e^{-\pi^2/\xi}$.  The remaining resurgence problem is to describe
the analytic continuation across the negative Borel direction and its relation to
the quasi-modular $E_2$ mixing in~\eqref{eq:h-q-series}.  The closed
$q/q^\ast$ representation should also sharpen numerical error estimates at
intermediate distances (including the functional form of $\delta_{\rm mod}(L)$ in
Eq.~\eqref{eq:E-smallL}), where neither the near-boundary Taylor expansion nor the
IR power-law tail is adequate.  Deep in the throat the
$q=e^{-\zeta/C}$ corrections are exponentially small, and on the large-$L$
string branch, where $L\propto\zeta_0^{1/4}$, these become corrections of order
$\exp(-\text{const}\cdot L^4)$ to the algebraic screening tail.  Thus the
approach to $E(L)\sim -\kappa_{\rm IR}/L^2$ is power-law first, with Mordell
corrections beyond all orders of the IR $1/L$ expansion.

\section*{Acknowledgments}
\noindent We are grateful to Ricardo Esp\'indola Romero, Jorge Russo, Leonardo
Santilli and Joan Sim\'on for correspondence and useful commentary at different
stages of this work.  Part of this work was carried out while visiting the
Centro de Ciencias de Benasque Pedro Pascual, Universidad Complutense de Madrid
and Universitat de Barcelona.  We thank these institutions for the hospitality.
A talk by Jun Nian at the International Congress of Basic Science (ICBS) in
Beijing in 2025 provided motivation for some of the developments in this work.

\FloatBarrier

\appendix
\numberwithin{equation}{section}

\section{Numerical implementation details}
\label{app:numerics}

This appendix collects the computational specifications underlying
Figs.~\ref{fig:h_series}--\ref{fig:G0_compare}, in sufficient detail for
independent reproduction.  All calculations were performed in double-precision
floating-point arithmetic ($\approx 15$--$16$ significant digits).

\subsection{\texorpdfstring{Evaluation of $h(\xi)$ and $\mathcal{G}_0(\xi)$}{Evaluation of h(xi) and G0(xi)}}
\label{app:h-eval}

The exact metric factor~\eqref{eq:h-exact-normalized},
\begin{equation}
h(\xi)=2\xi^{2}\int_0^\infty dy\,y^3\,e^{-\xi y^2}\,\coth(\pi y)\,,
\label{eq:h-quadrature}
\end{equation}
was evaluated by adaptive Gauss-Kronrod quadrature (the \texttt{scipy.integrate.quad} routine, which wraps the QUADPACK library~\cite{Piessens:1983}).
The integrand is smooth on $(0,\infty)$; the factor $\coth(\pi y)$ has a $1/y$ singularity
at $y=0$, but the product $y^3\coth(\pi y)\sim y^2/\pi$ is regular.
For numerical stability, the integral was split at $y=1$:
on $[0,1]$, we subtracted the $1/(\pi y)$ piece of $\coth(\pi y)$ and evaluated the singular part $\int_0^1 y^2 e^{-\xi y^2}/\pi\,dy$ analytically
(an incomplete gamma function).  The remainder and the tail $[1,\infty)$ were computed
by adaptive quadrature.  Absolute and relative tolerances were set to $10^{-12}$.

\medskip\noindent\textit{Convergence check.}
Increasing the tolerances to $10^{-14}$ changed $h(\xi)$ by less than $10^{-13}$ at all sampled
$\xi\in[10^{-3},10^5]$.  The truncation of the upper integration limit was checked by verifying
that the Gaussian factor $e^{-\xi y^2}$ renders the tail contribution below $10^{-15}$ for
$y>y_{\max}$ with $y_{\max}=\max(20,\,5/\sqrt{\xi})$.

The confinement indicator $\mathcal{G}_0(\xi)=h(\xi)/\xi^2$ was obtained by dividing the
same integral, and its derivatives (used in checking complete monotonicity numerically) were
computed by evaluating the corresponding moment integrals
$(-1)^n\mathcal{G}_0^{(n)}(\xi)=2\int_0^\infty y^{3+2n}e^{-\xi y^2}\coth(\pi y)\,dy$
with the same quadrature scheme.

\medskip\noindent\textit{Variance evaluation.}
The variance \eqref{eq:var-exact} was evaluated with the same split quadrature
applied to the moments $\int_0^\infty dy\,y^{3+2n}e^{-\xi y^{2}}\coth(\pi y)$,
$n=0,1,2$; the identity between \eqref{eq:h2-coth} and the
$\mathcal{G}_0$-derivative form in \eqref{eq:var-exact} was confirmed to
$12$ digits over $0.5\le\xi\le10$, and the infrared coefficient
$\sqrt{\pi}/3$ to four digits at $\xi=10^{4}$.

\subsection{Evaluation of the Wilson-loop parametric integrals}
\label{app:rect-numerics}

The parametric integrals~\eqref{eq:numerical-L-throat}--\eqref{eq:numerical-E-throat}
(throat-only) and the full matched-ansatz
integrals~\eqref{eq:L-parametric}--\eqref{eq:E-parametric} were evaluated as follows.

\medskip\noindent\textit{Endpoint regularization.}
Both integrals develop an integrable square-root singularity at $u\to 1^-$ (throat)
or $v\to 1^+$ (full geometry).  This was removed by the substitution $u=1-s^2$
(equivalently $v=1+s^2$), which renders the transformed integrand bounded.  The
regularized integrals were computed by adaptive Gauss-Kronrod quadrature with
absolute tolerance $10^{-10}$ and relative tolerance $10^{-10}$.

\medskip\noindent\textit{Turning-point grid.}
For the throat-only computation, 200 values of $\xi_0$ were spaced logarithmically
over $[10^{-1},10^5]$.  For the matched-ansatz computation, 100 values of
$U_0-U_T$ were spaced logarithmically over $[10^{-7},10^2]$.  At each grid point,
the function $h(\xi)$ was evaluated by the quadrature scheme of
Appendix~\ref{app:h-eval}; there is no interpolation table.

\medskip\noindent\textit{UV cutoff.}
The full-geometry integrals were cut off at $U_{\max}=10^4$.  The cutoff dependence
was checked by repeating the computation at $U_{\max}=5\times 10^3$ and $U_{\max}=5\times 10^4$:
the relative change in the renormalized energy was below $10^{-6}$ at all sampled
turning points.

\medskip\noindent\textit{Derivative diagnostics.}
The stability and concavity conditions~\eqref{eq:numerical-sign-checks} involve first
and second derivatives of the parametric curve $(L(U_0),E(U_0))$.  These were
computed by centered finite differences on the $(L,E)$ data:
\begin{equation}
\frac{dE}{dL}\bigg|_i
\simeq\frac{E_{i+1}-E_{i-1}}{L_{i+1}-L_{i-1}}\,,
\qquad
\frac{d^2E}{dL^2}\bigg|_i
\simeq\frac{(E_{i+1}-E_i)/(L_{i+1}-L_i)-(E_i-E_{i-1})/(L_i-L_{i-1})}{(L_{i+1}-L_{i-1})/2}\,.
\label{eq:finite-diff}
\end{equation}
The logarithmic grid spacing ensures that the fractional step
$\Delta\log(U_0-U_T)\approx 0.09$ is small enough for the finite-difference
quotients to be accurate; doubling the number of grid points to 200 changed the
computed signs and local exponents by less than $0.1\%$.

\medskip\noindent\textit{Observable force comparison.}
For Fig.~\ref{fig:money-plot}, the force was evaluated directly from the first
integral,
\begin{equation}
        |F|(U_0)=\frac{U_0^2}{2\pi}\sqrt{f(U_0)h(U_0)},
\label{eq:force-numerics}
\end{equation}
in the same units $U_T=R=C=1$ used in the matched-ansatz computation.  The exact branch used
150 logarithmically spaced values of $U_0-U_T\in[10^{-6},10^3]$.  The truncated
branch used $h^{(4)}$ and was sampled as
$U_0-U_T=(U_{\rm wall}-U_T)+s$, with
$s\in[10^{-7},10^{2.5}]$ and $U_{\rm wall}-U_T=0.0198806$.  The local exponent
$n_F$ was computed by centered finite differences on $(\log L,\log |F|)$.  For
this plotting-only comparison, $h(\xi)$ was tabulated on 2000 logarithmically
spaced points in $\xi\in[10^{-8},10^8]$ using the equivalent quadrature
\begin{equation}
        h(\xi)=\int_0^\infty dx\;x e^{-x}
        \coth\!\left(\pi\sqrt{x/\xi}\right),
\label{eq:h-fast-quadrature}
\end{equation}
and interpolated with a monotone piecewise cubic interpolant~\cite{Fritsch:1980} in $(\log\xi,\log h)$; direct
quadrature spot checks changed the plotted force by less than the line width.
The exact energy in Fig.~\ref{fig:money-plot}(a) was reconstructed by integrating
$dE/dL=|F|$ and fixing the largest-$L$ endpoint to the analytic tail
$E=-\kappa_{\rm RN}/L^2$.  The truncated energy was shifted by an additive
constant to agree with the exact curve at the shortest plotted separation.

\medskip\noindent\textit{Numerical values.}
Table~\ref{tab:rect-checks} collects the key numerical outputs and their
sensitivity to the computational parameters.

\begin{table}[h]
\centering
\begin{tabular}{lccc}
\hline
Quantity & Value & $U_{\max}$ sensitivity & Grid sensitivity \\
\hline
$(-E_{\rm th})L_{\rm th}^2$ ($\xi_0>10^3$) & $0.26003$ & $<10^{-5}$ & $<10^{-4}$ \\
$d\log(-E_{\rm th})/d\log L_{\rm th}$ & $-2.001$ & $<10^{-3}$ & $<10^{-3}$ \\
$(-E)L^2$ (matched, $\xi_0>10^4$) & $1.804\times 10^{-3}$ & $<10^{-6}$ & $<10^{-4}$ \\
$d\log(-E)/d\log L$ (matched) & $-2.003$ & $<10^{-3}$ & $<10^{-3}$ \\
Analytic $\kappa_{\rm th}$ & $0.259921$ & --- & --- \\
Analytic $\kappa_{\rm RN}=\kappa_{\rm th}/144$ & $1.80501\times 10^{-3}$ & --- & --- \\
\hline
\end{tabular}
\caption{Key numerical outputs from the rectangular Wilson-loop evaluation and their
sensitivity to computational parameters.  ``$U_{\max}$ sensitivity'' is the
relative change when $U_{\max}$ is varied by a factor of 5.  ``Grid sensitivity''
is the relative change when the number of turning points is doubled.}
\label{tab:rect-checks}
\end{table}

\medskip\noindent\textit{Software.}
All computations used Python~3.11 with NumPy~1.26 and SciPy~1.12.
Plots were generated with Matplotlib~3.8.  The source bundle includes
\texttt{code/reproduce\_figures.py}, which exposes the quadrature for $h(\xi)$,
$\mathcal{V}(\xi)$, the universal throat curve, and the matched force
comparison, and reproduces Figs.~\ref{fig:variance},
\ref{fig:universal-throat}, and~\ref{fig:money-plot}.

\FloatBarrier

\bibliographystyle{unsrt}
\bibliography{refs}

\end{document}